\DeclareMathOperator*{\argmin}{argmin}
\DeclareMathOperator*{\expectation}{\mathbb{E}}
\let\poly\relax
\DeclareMathOperator*{\poly}{poly}
\DeclareMathOperator*{\probability}{Pr}
\renewcommand\R{\mathbb{R}}
\newcommand\expect[1]{\expectation\left[ #1 \right]}
\newcommand\expectarg[2]{\expectation_{#2}\left[ #1 \right]}
\newcommand\prob[1]{\probability\left[ #1 \right]}
\newcommand\mutcov{\mathcal{I}_f}
\newcommand\opt{\textsc{Opt}\xspace}
\newcommand\ssc{\hyperref[sec:introduction]{\sscText}\xspace}
\newcommand\ossc{\hyperref[prob:ossc]{\osscText}\xspace}
\newcommand\fmin{f_{\min}}
\newcommand\fmax{f_{\max}}
\newcommand\cmin{c_{\min}}
\newcommand\cmax{c_{\max}}
\theoremstyle{plain}
\newtheorem{theorem}{Theorem}
\newtheorem{lemma}[theorem]{Lemma}
\newtheorem{fact}[theorem]{Fact}
\newtheorem{claim}[theorem]{Claim}
\theoremstyle{definition}
\newtheorem*{definition}{Definition}
\theoremstyle{definition}
\crefname{claim}{Claim}{Claims}
\crefname{theorem}{Theorem}{Theorems}
\newlength{\continueindent}
\newcommand*{\ALG@customparshape}{\parshape 2 \leftmargin \linewidth \dimexpr\ALG@tlm+\continueindent\relax \dimexpr\linewidth+\leftmargin-\ALG@tlm-\continueindent\relax}
\apptocmd{\ALG@beginblock}{\ALG@customparshape}{}{\errmessage{failed to patch}}
\def\thm@space@setup{%
	\thm@preskip=\parskip \thm@postskip=0pt
}
\newcommand{\ALGtikzmarkcolor}{black}
\newcommand{\ALGtikzmarkextraindent}{4pt}
\newcommand{\ALGtikzmarkverticaloffsetstart}{-.5ex}
\newcommand{\ALGtikzmarkverticaloffsetend}{-.5ex}
\newcounter{ALG@tikzmark@tempcnta}
\newcommand\ALG@tikzmark@start{%
	\global\let\ALG@tikzmark@last\ALG@tikzmark@starttext%
	\expandafter\edef\csname ALG@tikzmark@\theALG@nested\endcsname{\theALG@tikzmark@tempcnta}%
	\tikzmark{ALG@tikzmark@start@\csname ALG@tikzmark@\theALG@nested\endcsname}%
	\addtocounter{ALG@tikzmark@tempcnta}{1}%
}
\def\ALG@tikzmark@starttext{start}
\newcommand\ALG@tikzmark@end{%
	\ifx\ALG@tikzmark@last\ALG@tikzmark@starttext
	\else
	\tikzmark{ALG@tikzmark@end@\csname ALG@tikzmark@\theALG@nested\endcsname}%
	\tikz[overlay,remember picture] \draw[\ALGtikzmarkcolor] let \p{S}=($(pic cs:ALG@tikzmark@start@\csname ALG@tikzmark@\theALG@nested\endcsname)+(\ALGtikzmarkextraindent,\ALGtikzmarkverticaloffsetstart)$), \p{E}=($(pic cs:ALG@tikzmark@end@\csname ALG@tikzmark@\theALG@nested\endcsname)+(\ALGtikzmarkextraindent,\ALGtikzmarkverticaloffsetend)$) in (\x{S},\y{S})--(\x{S},\y{E});%
	\fi
	\gdef\ALG@tikzmark@last{end}%
}
\apptocmd{\ALG@beginblock}{\ALG@tikzmark@start}{}{\errmessage{failed to patch}}
\pretocmd{\ALG@endblock}{\ALG@tikzmark@end}{}{\errmessage{failed to patch}}
\newcommand{\ts}{\textstyle}
\title{The Online Submodular Cover Problem}
\author{Anupam Gupta\thanks{Department of Computer Science, New York University, New York, NY 10012. Email: \texttt{anupam.g@nyu.edu}.}\ $^{\ddagger}$
	\and
	Roie Levin\thanks{Department of Computer Science, Rutgers University, Piscataway, NJ 08854. Email: \texttt{roie.levin@rutgers.edu}.}\ \thanks{The original version of this paper was written while the authors were affiliated with Carnegie Mellon University and supported by NSF awards CCF-1536002, CCF-1540541,
		and CCF-1617790.}
}
\date{}
\begin{document}

\maketitle

\begin{abstract}
	\medskip
	\small{
In the submodular cover problem, we are given a monotone submodular
function $f: 2^N \to \R_+$, and we want to pick the min-cost set $S$
such that $f(S) = f(N)$. This captures the set cover problem when $f$
is a coverage function. Motivated by problems in network monitoring and resource allocation,
we consider the submodular cover problem in an \emph{online} setting. As a concrete example, 
suppose at each time $t$, a nonnegative
monotone submodular function $g_t$ is given to us. We define
$f^{(t)} = \sum_{s \leq t} g_s$ as the sum of all functions seen so
far. We need to maintain a submodular cover of these submodular
functions $f^{(1)}, f^{(2)}, \ldots f^{(T)}$ in an online fashion;
i.e., we cannot revoke previous choices. Formally, at each time $t$ we
produce a set $S_t \subseteq N$ such that $f^{(t)}(S_t) =
f^{(t)}(N)$---i.e., this set $S_t$ is a
cover---such that $S_{t-1} \subseteq S_t$, so previously decisions to pick elements cannot be revoked.
(We actually allow more general sequences $\{f^{(t)}\}$ of submodular
functions, but this sum-of-simpler-submodular-functions case is useful for concreteness.)

We give polylogarithmic competitive algorithms for this online
submodular cover problem. The competitive ratio on an input sequence
of length $T$ is $O(\ln n \ln (T \cdot f(N) / \fmin))$, where $\fmin$ is the smallest nonzero marginal for functions
$f^{(t)}$, and $|N| = n$. For the special case of online set cover,
our competitive ratio matches that of Alon et
al.~\cite{alon2003online}, which are best possible for polynomial-time
online algorithms unless $\NP \subseteq \BPP$~\cite{korman2004use}.
Since existing offline algorithms for submodular cover are based on
greedy approaches which seem difficult to implement online, the
technical challenge is to (approximately) solve the exponential-sized
linear programming relaxation for submodular cover, and to round it,
both in the online setting. Moreover, to get our competitiveness
bounds, we define a (seemingly new) generalization of mutual information to
general submodular functions, which we call \emph{mutual coverage}; we hope this will
be useful in other contexts.}
\end{abstract}

\newpage
\setcounter{page}{1}

\section{Introduction}
\label{sec:introduction}

We consider the problem of maintaining a minimum cost submodular cover
online. 
In the offline Submodular Cover problem (\ssc) first introduced
in the seminal work of Wolsey \cite{Wolsey1982}, we are given a
monotone, submodular function $f$ over a universe $N$ along with a linear cost function $c$, and we wish to
select a minimum cost set $S \subseteq N$ such that $f(S) = f(N)$. We
may interpret $f$ as modeling an abstract notion of coverage, and $S$ as
a subset of elements guaranteeing full coverage. For example, if $N$ is
a collection of subsets over a ground set and $f(S)$ computes the
cardinality of the union of sets in $S$, \ssc exactly recovers the
classical Set Cover problem. Another example is the partial cover
problem and its capacitated version~\cite{fujito2000approximation}, in
which $f$ is the minimum of a coverage function and a threshold function. 

Like in the case of set cover, for many applications it does not suffice
to solve a single static instance of \ssc. Instead, the function
$f$---the notion of coverage---changes over time, so we need to update
our cover $S$.
For example, in the Source Location
problem~\cite{Andreev:2009:SSL:1644015.1644031}, given a network
equipped with flow demand at every vertex, we want to designate the
minimum number of vertices as servers, such that every vertex can
simultaneously route its demand from some server without violating edge
capacities. \cite{Andreev:2009:SSL:1644015.1644031} show that if $f(S)$
is the maximum demand satisfied by selecting $S$ as the set of servers,
then $f$ is submodular and the problem can cleanly be solved using the
\ssc framework. In the likely scenario that communication demand evolves
over
time, 
revoking prior decisions may be expensive or prohibitive, hence we want
that our solution be monotone---i.e., we only add servers to $S$ and not
remove any. See the related works section for further examples in which
avoiding revoking past decisions is useful in 
applications. 

We ask the natural question: is it possible to maintain a competitive
solution to an online \ssc instance as $f$ changes over time? We show
that under minimal monotonicity assumptions on the changes in $f$, the
answer is yes. Formally, the online submodular cover (\ossc) problem is
the following:

\begin{quote}
  \label{prob:ossc}
  We are given a ``time-monotone''\footnote{This ``time-monotonicity''
    means that any submodular cover for $f^{(t)}$ is also a cover for
    $f^{(t-1)}$; the functions cannot be completely unrelated to each
    other. This is a minimal requirement; we show in
    \cref{section:timeMonotonicity} that dropping it makes the problem
    intractable. On the other hand, 
    we do not explicitly require that
    $f^{(t)}(S) \geq f^{(t-1)}(S)$.} 
  sequence of monotone submodular functions $f^{(1)}, f^{(2)}, \cdots, $
  over a common universe $N$ that arrive online, as well as a fixed cost
  function $c$. Upon seeing $f^{(t)}$ at time $t$, our algorithm must
  output a set $S_t$ such that $f^{(t)}(S_t) = f^{(t)}(N)$.  Moreover,
  past decisions cannot be revoked, so we require
  $S_{t-1} \subseteq S_t$.
\end{quote}

\medskip One example of this time-monotone setting was mentioned in the
abstract: suppose each time $t$ we get a new positive monotone
submodular function $g_t$, and define $f^{(t)} := \sum_{s \leq t}
g_s$. It may be convenient for the reader to keep this example in mind
since it captures most features of the general problem.

\begin{restatable}[Main Theorem]{theorem}{mainthm}
	\label{thm:main}
  There exists an efficient randomized algorithm for the \ossc problem
  which guarantees that for each $T$, the expected cost of solution
  $S_T$ is within $O(\ln n \cdot \ln (T\cdot f(N) / \fmin))$ of the
  optimal submodular cover solution for $f^{(T)}$.
\end{restatable}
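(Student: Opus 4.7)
The plan is to follow a two-stage template familiar from online set cover: solve a fractional relaxation online, then round online while respecting the monotonicity requirement $S_{t-1}\subseteq S_t$. Concretely, for each time $t$ I would work with the standard submodular-cover LP in variables $x_e\in[0,1]$:
\begin{equation*}
\min \sum_e c_e x_e \quad \text{s.t.}\quad \sum_{e\notin A} x_e\bigl(f^{(t)}(A+e)-f^{(t)}(A)\bigr)\;\ge\;f^{(t)}(N)-f^{(t)}(A)\quad\forall A\subseteq N.
\end{equation*}
Time-monotonicity guarantees that a feasible solution for $f^{(t)}$ is also feasible for every earlier $f^{(s)}$, so it is consistent to maintain $x$ that only grows with $t$.

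\textbf{Step 1: Online fractional algorithm.} I would run a continuous / multiplicative-weights style update that monotonically raises $x_e$. At each time $t$ a separation oracle finds the most-violated constraint (equivalently, a residual set $A$ whose slack is largest), and the $x_e$ with positive marginal $f^{(t)}(A+e)-f^{(t)}(A)$ are pushed up multiplicatively, with the step size calibrated by $c_e$. A dual fitting / potential argument in the spirit of Alon--Awerbuch--Azar--Buchbinder--Naor should give a fractional cost bounded by $O(\ln(Tf(N)/\fmin))$ times the optimal integral cost for $f^{(T)}$. The factor $\ln(Tf(N)/\fmin)$ arises because the total ``coverage budget'' consumed across all rounds and all separation calls is bounded by $T\cdot f(N)$, with $\fmin$ controlling the granularity of a single update.

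\textbf{Step 2: Online rounding via random thresholds.} For each $e$, at initialization sample $k=\Theta(\ln n)$ i.i.d.\ thresholds $\theta_e^{(1)},\dots,\theta_e^{(k)}\sim\mathrm{Unif}[0,1]$, and add $e$ to $S_t$ as soon as some $\theta_e^{(j)}\le x_e^{(t)}$. Because the $x_e$'s are nondecreasing, the rounded sets are automatically monotone: $S_{t-1}\subseteq S_t$. The expected cost is $\expect{\sum_e c_e\mathbbm{1}[e\in S_t]}\le k\sum_e c_e x_e^{(t)}$, giving the extra $O(\ln n)$ factor and hence the claimed competitive ratio.

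\textbf{Step 3: Feasibility via mutual coverage.} The main obstacle is proving $f^{(t)}(S_t)=f^{(t)}(N)$ w.h.p.\ at every time $t$, because the LP has exponentially many constraints per round and the rounded elements are correlated across time. I expect this is where the new notion of \emph{mutual coverage} $\mutcov$ does the work: it generalizes mutual information to arbitrary monotone submodular $f$, satisfies a chain-rule-like decomposition, and should let us bound the expected residual $\expect{f^{(t)}(N)-f^{(t)}(S_t)}$ by a geometric series that shrinks by a constant factor per independent threshold level. Boosting with $k=\Theta(\ln n)$ drives the residual below $\fmin$, at which point monotonicity forces full coverage; a union bound over the $T$ rounds costs only a logarithmic factor absorbed into the existing $\ln(Tf(N)/\fmin)$ term. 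This step is the technical crux, since one cannot directly apply Chernoff to exponentially many covering constraints, and standard submodular concentration inequalities do not obviously respect the ``no revoke'' discipline; mutual coverage is the tool that replaces both.
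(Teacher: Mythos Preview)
Your proposal has the two logarithmic factors playing swapped roles, and this is not cosmetic. In the paper, the $O(\ln n)$ comes from solving the covering LP online via the Buchbinder--Naor primal--dual framework (\Cref{thm:buchbinder2009}): that bound depends only on the number of variables, not on $T$, $f(N)$, or $\fmin$. The $O(\ln(T\cdot f(N)/\fmin))$ factor comes from the \emph{rounding}: one shows (via the relationship $F_S(x)\ge(1-e^{-1})f_S^*(x)$ between the multilinear and covering extensions, \Cref{lem:multilinFstar}) that each independent round of sampling from a feasible $x$ reduces the expected residual $f_R(N)$ by a constant factor, so $k=\ln(t^2 f(N)/\fmin)$ rounds drive it below $t^{-2}\fmin$. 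Your Step~1 claim that the fractional online bound is $O(\ln(Tf(N)/\fmin))$ has no obvious proof, and your Step~3 claim that $k=\Theta(\ln n)$ threshold copies drive the residual below $\fmin$ is false: $\ln n$ has nothing to do with the ratio $f(N)/\fmin$, which is what actually governs how many rounds are needed.

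Second, the paper never proves feasibility ``w.h.p.''; it \emph{guarantees} feasibility by an alteration phase: after sampling, greedily add the cheapest element with nonzero marginal until coverage is met. The expected residual bound then translates into a bound on the expected alteration cost (each greedy element has marginal $\ge\fmin$, so their number is small in expectation). This sidesteps the union-bound-over-exponentially-many-constraints issue you flag in Step~3. Relatedly, \emph{mutual coverage is not used at all} in the proof of \Cref{thm:main}; it appears only in the refined \Cref{thm:main2} for $3$-increasing functions, where the $f(N)$ in the bound is improved to $\fmax$.

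Finally, your Step~1 assumes a separation oracle, but separating over the exponentially many constraints of~\eqref{eq:submodCoverLP} is \APX-hard. The paper's efficiency argument (\cref{subsec:polytime}) is a genuine contribution: a ``round-or-separate'' procedure that uses exponential-clock sampling to either certify that rounding will make progress or to exhibit a constraint violated by a constant margin, together with a potential argument showing only $O(n/\epsilon)$ such violations can be found. Also note your passing claim that time-monotonicity makes a fractional solution for $f^{(t)}$ feasible for earlier $f^{(s)}$ is false in general; the paper only shows this for \emph{integer} solutions (\Cref{lem:integerpts} and the lemma following it), which is enough because the union of all constraints is still a valid relaxation of the integer program.
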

\medskip
Here 
$\fmax$ and $\fmin$
are the largest and smallest marginals for any of the functions in the
input sequence. 
When restricting to the case of online set cover,
$f_{\max} = f_{\min} =1$ and $T = f(N)$ is the number of elements of the set system, and hence this competitiveness guarantee
matches that of the best current algorithm for that
problem~\cite{alon2003online}. Moreover, if we restrict ourselves to
efficient algorithms, the hardness result of Feige and
Korman~\cite{korman2004use} says that our result is the best possible up
to constant factors unless $\NP \subseteq
\BPP$. 

Our algorithm also is interesting in the \emph{offline} setting: indeed,
if we are given a single submodular function (i.e., $T=1$), we can avoid
the $O(\ln n)$ loss that comes from solving an LP relaxation online. So
we get an $O(\ln (f(N) / \fmin))$-approximate solution via randomized
rounding a fractional LP solution. This is the first LP-rounding
algorithm known for \ssc, and it matches the (optimal) performance of
the greedy algorithm (analyzed by dual-fitting). LP rounding approaches
are powerful and versatile, since we can add side constraints to the LP
and maintain them (approximately) during rounding, so we hope that our
techniques will find applications even in the offline
setting. 

For a subclass of nonnegative, monotone, submodular functions known as $3$-increasing functions, we show slightly improved results improving the $f(N)$ parameter to $\fmax$.

\begin{restatable}{theorem}{mainthm2}
	\label{thm:main2}
	There exists an efficient randomized algorithm for the \ossc problem for $3$-increasing functions
	which guarantees that for each $T$, the expected cost of solution
	$S_T$ is within $O(\ln n \cdot \ln (T\cdot \fmax / \fmin))$ of the
	optimal submodular cover solution for $f^{(T)}$.
\end{restatable}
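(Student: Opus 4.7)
The plan is to follow the framework established for \cref{thm:main} and replace a single structural estimate with a sharper bound that exploits the $3$-increasing property. The proof of \cref{thm:main} surely proceeds in three stages: (i) formulating an LP relaxation of \ssc, (ii) maintaining a feasible fractional solution online via a multiplicative-weights-style update, and (iii) rounding the fractional solution online without revoking elements. The $\ln n$ factor comes from online rounding, while the $\ln(T \cdot f(N) / \fmin)$ factor comes from the range of constraint values the fractional algorithm must track---this essentially telescopes over $\Theta(\log(f(N)/\fmin))$ scales of marginal size in $[\fmin, f(N)]$. To establish \cref{thm:main2}, I would first audit every appearance of $f(N)$ in that analysis and aim to replace it by $\fmax$.

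The key step is a structural lemma exploiting $3$-increasingness, which asserts that third-order discrete differences of $f$ are nonnegative, equivalently that the marginals $S \mapsto f(e \mid S)$ are supermodular in $S$. Intuitively this prevents the marginals of $f^{(t)}$ from ``rebounding'' as we extend the base set: once an element's marginal has shrunk below some threshold, it stays small. I would use this to show that within each scale of marginals (say, powers of $2$ between $\fmin$ and $\fmax$), the total per-round potential contribution is bounded by $O(\fmax)$ rather than $O(f(N))$. Plugging this bucketed estimate into the potential-function analysis of \cref{thm:main} then collapses the $\log(f(N)/\fmin)$ factor to $\log(\fmax/\fmin)$ while preserving the $\ln n$ rounding factor, which yields the advertised competitiveness.

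The main obstacle is formalizing this structural lemma with the correct quantitative dependence. For a general monotone submodular $f$, marginals can vary across the full range $[\fmin, f(N)]$ in an essentially arbitrary manner, and the potential analysis must absorb this. The $3$-increasing property gives extra monotone behavior on \emph{how} marginals decrease, but translating this into a clean $\fmax$-style bound requires a careful induction over the chain $S_0 \subseteq S_1 \subseteq \cdots$ produced by the rounding algorithm, and this induction has to cooperate with the online constraint that past decisions are irrevocable. I also expect to use $3$-increasingness to control the mutual coverage quantity introduced for \cref{thm:main}, showing that ``surprise'' coverage delivered by one element after another element's purchase cannot exceed $\fmax$ per scale. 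If these two strengthened estimates go through, the remainder of the proof of \cref{thm:main} applies essentially verbatim with $f(N)$ uniformly replaced by $\fmax$.
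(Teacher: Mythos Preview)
Your plan has a genuine gap: it misdiagnoses where the two logarithmic factors originate, and as a result proposes to modify the wrong piece of the analysis. In the proof of \cref{thm:main}, the $\ln n$ comes from the online LP solver (Theorem~\ref{thm:buchbinder2009}), not from rounding; the $\ln(T\cdot f(N)/\fmin)$ is the number $k$ of independent rounding rounds needed so that $\expect{f_R(N)}\le e^{-k}f(N)$ drops below $t^{-2}\fmin$. There is no telescoping over scales of marginal size, and no per-scale potential to tighten. Your ``once a marginal shrinks it stays small'' observation is ordinary submodularity, not the $3$-increasing property, so it cannot be what distinguishes \cref{thm:main2} from \cref{thm:main}.

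What the paper actually does is change the LP and the alteration analysis, not refine a potential bound. It strengthens \eqref{eq:submodCoverLP} to a new relaxation \eqref{eq:newRelaxation} whose constraints are $\sum_j \mutcov^{(v\mid S)}(j)\,x_j \ge \mutcov^{(v\mid S)}(N)$ for every $v$ and $S$; the $3$-increasing hypothesis is used precisely here, because it makes each $\mutcov^{(v)}$ submodular, so \eqref{eq:newRelaxation} is a valid relaxation (\cref{lem:fancyrelaxation}) and the \nameref{lem:roundingLemma} applies to each $\mutcov^{(v)}$ separately. One then rounds only $k=\ln(t^2\fmax/\fmin)$ times, which does \emph{not} force $\expect{f_R(N)}$ below $\fmin$ --- the naive greedy bound fails --- but does force $\expect{\mutcov(v;N\mid R)}\le t^{-2}(\fmin/\fmax)f(v)\le t^{-2}\fmin$ for every $v$. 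The greedy phase is then bounded by a fractional charging scheme: each greedy pick $G_t$ is charged to the \opt elements $O_s$ via $C(t,s)=\mutcov(G_t;O_s\mid\cdots)$, with the chain rule giving $\sum_s C(t,s)\ge\fmin$ and the mutual-coverage rounding bound giving $\expect{\sum_t C(t,s)}\le t^{-2}\fmin$. None of this is a bucketing argument, and the remainder of the \cref{thm:main} proof does \emph{not} go through verbatim with $f(N)$ replaced by $\fmax$; the alteration analysis is genuinely different.
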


This is analogous to improving the approximation ratio of set cover from $\log(\text{number of elements})$ to $\log(\text{maximum set size})$.

We note that the original conference version of this paper claimed the same $O(\ln n \cdot \ln (T\cdot \fmax / \fmin))$ guarantee for the class of all monotone submodular functions, but there was a gap in the argument. See \cref{sec:version_notes} for a discussion of the changes in this version.

\subsection{Our Techniques.}
There are a few challenges to proving \hyperref[thm:main]{Theorem 1.1}. A natural
approach is to mimic the strategy of \cite{alon2003online} for the
online set cover problem. I.e., we can try to maintain a solution to an
LP relaxation of the problem, and then perform randomized rounding
online. Unfortunately, the only known LP relaxation of \ssc from
\cite{Wolsey1982} is exponential-sized and it is not known how to solve it
efficiently, nor how to analyze randomized rounding given a fractional
solution. In this paper, we show that we can overcome these concerns and
make the technique work.

We start in \cref{subsec:slowgreedy} with an $O(\ln n \ln (T \cdot f(N)
/ \fmin))$ competitive ratio algorithm for online \ssc with no efficient
runtime guarantees. This allows us to illustrate the overall approach 
of solving the \ssc covering linear program online, in conjunction
with a randomized-rounding algorithm for it.  
The rounding is the natural one, where we sample from the fractional
solution multiple times (in an online fashion); as opposed to the set
cover-style analysis which argues item-by-item, here we have to argue
about the coverage as a whole. 
In particular, we exploit the relationship between the multilinear
extension and the linearizations of submodular functions (\Cref{lem:multilinFstar})
to reinterpret the constraints in the LP as statements about
expected coverage under randomized rounding.

In \cref{subsec:betterround}, we improve the
$O(\ln n \ln (T \cdot f(N) / \fmin))$-competitiveness to an
$O(\ln n \ln (T\cdot \fmax / \fmin))$-competitiveness for $3$-increasing functions, which is much
tighter for the case where the functions $f$ are ``smoother'' and each
element has smaller marginal value compared to $f(N)$.  The intuition
behind this refined analysis is that we charge each element selected by
the algorithm to the elements of \opt that ``cover the same part of the
space''.\footnote{The analogue in the case of set cover is an
  improvement from $O(\log \text{universe-size})$ to
  $O(\log \text{max-set-size})$.  However, there is a clear notion in
  set cover of \emph{items} that are covered by the same two sets (and
  the analysis proceeds item-by-item), whereas in \ssc it is not at all
  obvious what it means for the coverage of two elements to overlap.} To
capture the overlap between two elements, we study the \textit{mutual
  coverage} $\mathcal{I}_f(A;B) := f(A) + f(B) - f(A\cup B)$, which is natural
generalization of mutual information from information theory, and
inherits properties from that literature, such as the chain-rule. It turns out that mutual coverage is the right abstract
quantity to focus on for our analyses. We also have to strengthen the submodular cover LP slightly, and use a
modified rounding-with-alterations scheme to get the tighter
result. When run offline, our rounding algorithm recovers the
$O(\ln \fmax / \fmin)$ approximation ratio of Wolsey's algorithm.

In \cref{subsec:polytime}, we make the above algorithms run in
polynomial time. The difficulty is two-fold: firstly, we do not know how
to solve the submodular cover LP even in the offline setting, since the separation
problem itself is APX-hard. Secondly, even if we could separate in the
offline setting, na{\"\i}vely solving the exponential-sized LP online
requires us to give exponentially-many constraints one at a time to the
online primal-dual LP-solving framework. To
remedy this, we use a ``round-or-separate'' approach:
we use the rounding algorithm itself as an approximate separation
oracle. Indeed, we give an approximate separation oracle that given an
arbitrary solution promises to either round it to a feasible integer
solution with a similar objective value, or else to output a constraint
violated by the
solution.  
This approach to blur the line between rounding and separation may prove
useful for other problems in online algorithms, where we are faced with
an exponential number of new constraints at each step. 
Furthermore, we give a new exponential-clock based sampling procedure
that finds constraints violated by a large margin with high
probability. Our analysis extends a technique of
\cite{vondrak2007submodularity} that approximates independent random
sampling by a continuous process. We use a potential function argument
to conclude that we only need a polynomial number of calls to the
separation oracle in the worst case. 

\subsection{Related Work}

\paragraph{Submodular Cover.} Wolsey~\cite{Wolsey1982} showed that the natural greedy algorithm gives
a $1 + \ln (\fmax / \fmin)$ approximation for \ssc; this guarantee is
tight unless $\P = \NP$ even for the special case of set
cover~\cite{Dinur:2014:AAP:2591796.2591884}. Fujito~\cite{fujito2000approximation}
gave a dual greedy algorithm that generalizes the
$F-$approximation  for set cover~\cite{doi:10.1137/0211045}. Neither
approach seems portable to the online setting (even for set cover),
since changing $f$ to $f'$ changes the greedy/primal-dual solution in
non-smooth ways. One can also solve \ssc via repeated sequential use of submodular maximization
 subject to a cardinality/knapsack constraint (e.g. \cite{sviridenko2004note} or the survey \cite{buchbinder2017submodular}), but once again it is not clear how to do this online when the function $f$ changes. 

\ssc has been used in many applications to resource allocation and placement
problems, 
by showing that the coverage function is monotone and submodular,
and then applying Wolsey's greedy algorithm as a black box. 
We port these applications to the online setting where coverage
requirements change with time. 
E.g., in selecting influential nodes to disseminate information in social networks \cite{Goyal2013, loukides2016limiting, tong2017positive, IZUMI20102773}, exploration for robotics planning problems \cite{krause2008robust, jorgensen2017risk, beinhofer2013robust}, placing sensors 
\cite{WU201553, rahimian2015detection, zheng2017trading, 7504484}, and
other physical resource allocation objectives \cite{6996018, 7798894,
  tzoumas2016minimal}. There has been much recent interest in \ssc from the
networking community, as \ssc models network function placement tasks 
\cite{Andreev:2009:SSL:1644015.1644031, Lee:2013:FPC:2523616.2525960,
  kortsarz2015approximating, LUKOVSZKI2018159, chen2018virtual}. E.g., 
\cite{LUKOVSZKI2018159} want to place middleboxes in a
network incrementally, and point out that 
avoiding closing extant boxes is a huge boon in practice. We extend this
application to the case where 
the coverage function $f$ changes.

There are previous definitions of online submodular cover, but
similarity 
to our work is mainly in the name.  In one 
variant inspired by regret-minimization in online
learning~\cite{NIPS2008_3569, NIPS2011_4465},
a submodular function $f^{(t)}$ arrives at each time step $t$, and we
must output a permutation of the universe elements that minimizes the
(average or maximum) cover time. 
The approach compares the algorithm's objective value to the single best
\textit{fixed} cover in hindsight. Hence, there is no notion of
time-monotonicity, and of element selection being irrevocable.
Another distantly related line of work (\cite{golovin2011adaptive,
  Deshpande:2016:AAS:2930058.2876506, hellerstein2018revisiting,
  grammel2016scenario, agarwal2019stochastic}) explores a stochastic
variant of \ssc, where there is a fixed submodular function $f$, but
each element of the universe is active only with a certain
probability. The goal is to output an ordering of the elements
minimizing the expected cover time.

\paragraph{Online Coverage Problems.} 
In online set cover,
``items'' arrive online one by one; when
an item arrives we find out which sets cover this
item. \cite{alon2003online,buchbinder2009online} give an online algorithm with a
competitive ratio of $O(\log n \log T)$, where there are $n$ sets and
$T$ items. This is based on solving the LP relaxation via the online
primal-dual framework, followed by randomized rounding.
Our \ossc problem generalizes the online set cover problem;
if we identify $N$ with the sets, and $f^{(t)}$ is the coverage function
induced by the set system restricted to the first $t$ items.
\cite{buchbinder2009online,gupta2014approximating} extend this result to
a wide class of covering and packing LPs/IPs, with competitiveness
$O(\log n \log T)$ for solving $\{0,1\}$ covering IPs with $n$ variables
and $T$ constraints, and better results for row/column sparse instances.

\cite{gupta2014changing} 
maintain a matroid base (or spanning set) under
changing costs. Given a fixed matroid $\mathcal{M}$, each matroid
element is given holding and and acquisition costs at each time
$t$, after which we must output a matroid base (spanning set). We want to minimize the
sum of acquisition plus holding costs over time. In the spanning set case, when holding costs are
zero, acquisition costs are fixed and the matroid is additionally changing over time, this is the very special case of our \ossc problem where the
submodular functions are matroid rank functions.
Recall that matroid rank functions are a very restricted class of
submodular functions: for instance, we can efficiently find a min-weight
matroid base, while no efficient algorithms exist for finding a minimum
weight submodular cover unless $\P = \NP$.  
 \cite{buchbinder2014competitive}  study the online matroid caching problem, 
where one must maintain an independent set in a matroid subject to the constraints that the set must contain certain elements at specified points in time.

\subsection{Notation and Preliminaries.}
\label{subsec:notation}

A set function $f: 2^{N} \rightarrow \mathbb{R}^+$ 
is \textit{submodular} if $f(A \cap B) + f(A \cup B) \leq f(A) + f(B)$
for any $A, B \subseteq N$.  It is \textit{monotone} if $f(A) \leq f(B)$
for all $A \subseteq B \subseteq N$. We assume access to a \emph{value
  oracle} for $f$ that computes $f(T)$ given $T \subseteq N$. The
\textit{contraction} of $f: 2^N \rightarrow \mathbb{R}^+$
onto $N\setminus T$ is defined as 
 $f_T(S) = f(S \mid T) := f(S \cup T) - f(T)$. 
If $f$ is submodular then $f_T$ is also submodular for any
$T \subseteq N$. We use the following notation: 
\begin{align}
\fmax &:= \displaystyle \max \left\{ f^{(t)}(j \mid S) \Big \vert 
j \in N, \ t \in [T], \ S \subseteq N \right\} \label{line:fmaxdef}\\
\fmin &:= \displaystyle \min \left\{ f^{(t)}(j \mid S) \Big \vert 
j \in N, \ t \in [T], \ S \subseteq N, \
f^{(t)}(j \mid S) \neq 0 \right\} \label{line:fmindef}
\end{align}
Note that if $f^{(t)} = \sum_{s \leq t} g_s$ is the prefix sum of monotone submodular
functions arriving online, then $\fmax, \fmin$ are at most $T$ times
maximum and minimum marginals of these $g_s$ functions. Also we let $\cmax$ and $\cmin$ denote the largest and smallest costs of elements respectively.

The well-known \textit{multilinear extension} of $f$ is 
defined as:
\begin{align}
F(x) := \expectarg{f(S)}{S \sim x} = \sum_{S \subseteq N} \prod_{i \in S} x_i \prod_{j \not \in S} (1 - x_j) f(S) \label{line:fmultilinear}
\end{align}
where $S \sim x$ means we add each element $j \in N$ independently to
$S$ with probability $x_j$. We can naturally extend definition \eqref{line:fmultilinear} to contractions of functions: $F_T(x) := \expectarg{f_T(S)}{S \sim x}$.

\emph{Time-monotonicity} generalizes the notion of an insertion-only arrival model for coverage functions, and is consistent with the (insertion-only) formulation of online set cover as defined in \cite{alon2003online}.
\begin{definition}[Time-monotone]
	A sequence of monotone submodular functions $f^{(0)}, f^{(1)},
        f^{(2)}, \ldots, f^{(T)}$ is \textit{time-monotone} if for all
        $t$ and $S$, we have $f^{(t)}(S) = f^{(t)}(N) \implies f^{(t-1)}(S) = f^{(t-1)}(N)$.
\end{definition}
\subsection{Higher Order Monotonicity and Mutual Coverage.} To describe some of our results, we require some definitions which generalize the idea of monotonicity and submodularity.

Following the notation of \cite{DBLP:journals/mor/FoldesH05}, we define the \textit{derivative of a set function} $f$ as:
\[\frac{df}{dx} (S) = f(S \cup \{x\}) - f(S \backslash \{x\}).\]
We notate the $m$-th order derivative of $f$ with respect to the subset $A = \{i_1, \ldots i_m\}$ as $df/dA$, and this quantity has the following concise expression:
\[\frac{df}{dA}(S) = \sum_{B \subseteq A} (-1)^{|B|} f((S \cup A) \backslash B).\]

\begin{restatable}[$m$-increasing \cite{DBLP:journals/mor/FoldesH05}]{definition}{HOmonotone}
	\label{def:HOmonotone}
	We say that a set function is \textit{$m$-increasing} if all its $m$-th order derivatives are nonnegative, and \textit{$m$-decreasing} if they are nonpositive. We denote by $\mathcal{D}_m^+$ and $\mathcal{D}_m^-$ the classes of $m$-increasing and $m$-decreasing functions respectively.
\end{restatable}

Note that $\mathcal{D}_1^+$ is the class of monotone set functions, and $\mathcal{D}_2^-$ is the class of submodular set functions. When $f$ is the joint entropy set function, the $m$-th derivative above is also known as the \textit{interaction information}, which generalizes the usual mutual information for two sets of variables, to $m$ sets of variables. We now focus on the usual $m=2$ case, but relax $f$ to be a general nonnegative, monotone, submodular function.

\begin{definition}[Mutual Coverage]
  \label{def:mutualCoverage}
  The \textit{mutual coverage} and \emph{conditional mutual
    coverage} with respect to a set function
  $f: 2^N \rightarrow \mathbb{R}^+$ are 
  defined as:
  \begin{align}
    \mutcov(A ; B) &:= f(A) + f(B) - f(A \cup B) \label{eq:mutcov} \\
    \mutcov(A ; B \mid C) &:= f_C(A) + f_C(B) - f_C(A \cup B) \label{eq:cond-mutcov}
  \end{align}
\end{definition}
Independently, \cite{DBLP:conf/alt/IyerKBA21} defined and studied the same quantity under the slightly different name \textit{submodular mutual information}.

We write $\mutcov^{(A)}(B) := \mutcov(A ; B)$ and
$\mutcov^{(A\mid C)}(B) := I(A ; B \mid C)$ if we want to view mutual
coverage as a function of $B$ for fixed $A$ and $C$. We may think of
$\mutcov^{(A \mid C)}(B)$ intuitively as the amount of coverage $B$
takes away from the coverage of $A$ given that $C$ was already chosen
(or vice-versa, since the definition is symmetric in $A$ and $B$).

\begin{fact}[Chain Rule]
	\label{fact:chainRule}
  Mutual coverage respects the identity:
  \begin{align*}
    &\mutcov(A ; B_1 \cup B_2 \mid C) = \mutcov(A; B_1 \mid C) + \mutcov(A; B_2 \mid C \cup B_1)
  \end{align*}	
\end{fact}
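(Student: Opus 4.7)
The plan is to verify the identity by direct expansion of both sides using the definitions of conditional mutual coverage and contraction, and then observing that most terms cancel. This is a structural identity that mirrors the classical chain rule for mutual information, so I expect it to reduce to a line or two of algebra once we telescope the contractions correctly; there should be no real obstacle, only bookkeeping.

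First I would expand the left-hand side as
\[
\mutcov(A;B_1\cup B_2 \mid C) = f_C(A) + f_C(B_1\cup B_2) - f_C(A\cup B_1\cup B_2).
\]
Next I would expand the two summands on the right-hand side: the first is $\mutcov(A;B_1\mid C) = f_C(A) + f_C(B_1) - f_C(A\cup B_1)$, and the second involves contractions at $C\cup B_1$, so I would use the telescoping identity $f_{C\cup B_1}(S) = f_C(S\cup B_1) - f_C(B_1)$ (which is immediate from the definition $f_T(S) = f(S\cup T)-f(T)$) to rewrite
\[
\mutcov(A;B_2\mid C\cup B_1) = \bigl(f_C(A\cup B_1)-f_C(B_1)\bigr) + \bigl(f_C(B_1\cup B_2)-f_C(B_1)\bigr) - \bigl(f_C(A\cup B_1\cup B_2)-f_C(B_1)\bigr).
\]

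Finally I would add the two right-hand side terms: the $-f_C(A\cup B_1)$ from the first term cancels against the $+f_C(A\cup B_1)$ from the rewritten second term, and the $+f_C(B_1)$ from the first term cancels against one of the $-f_C(B_1)$ summands, while the remaining $-f_C(B_1)+f_C(B_1) = 0$. What survives is $f_C(A) + f_C(B_1\cup B_2) - f_C(A\cup B_1\cup B_2)$, which is exactly the expanded left-hand side. Since the manipulation only uses the definitions of $\mutcov$ and of contraction (and not submodularity or monotonicity), the identity in fact holds for arbitrary set functions $f$, which also doubles as a useful sanity check that the computation is correct.
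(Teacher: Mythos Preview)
Your proof is correct: the direct expansion using the telescoping identity $f_{C\cup B_1}(S)=f_C(S\cup B_1)-f_C(B_1)$ reduces both sides to $f_C(A)+f_C(B_1\cup B_2)-f_C(A\cup B_1\cup B_2)$, and as you note this uses only the definitions, not monotonicity or submodularity. The paper itself states this fact without proof (just remarking that it generalizes the classical chain rule for mutual information), so your argument is exactly the routine verification one would supply.
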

This neatly generalizes the chain rule for mutual information.
The expression $\mutcov^{(A \mid C)}(B)$ is monotone in $B$ by the
submodularity of $f$, but not submodular in general (see \cref{section:appendixStrongSubmod}).

We will soon show improved algorithms for the class $\mathcal{D}_3^+$, that is the class of $3$-increasing set functions. The following derivation gives some intuition for these functions.
\begin{align*}
	&\frac{df}{d\{x,y,z\}}(S) = \sum_{B \subseteq \{x, y, z\}} (-1)^{|B|} f((S \cup \{x,y,z\}) \backslash B)  \\
	&= f(x \mid S) - f(x \mid S \cup \{y\}) - ( f(x \mid S \cup \{z\}) - f(x \mid S \cup \{y, z\}) ) = \mutcov(x,y \mid S) - \mutcov(x,y \mid S \cup \{z\}).
\end{align*}
Thus a function $f$ in contained in $\mathcal{D}_3^+$ if and only if mutual coverage decreases after conditioning, in other words mutual coverage is submodular in each of its arguments.

\section{An $O\left(\ln n \ln \left( T \cdot f(N) / \fmin \right) \right)$ competitive algorithm}
\label{subsec:slowgreedy}

In this section, we show how to get our
first 
algorithm for \ossc: we use the approach of solving covering LPs online
to maintain a competitive fractional solution $x_t$ at every time $t$,
with the special property that $x_1, x_2, \ldots, x_T$ are monotonically
increasing, i.e., $x_t \geq x_{t-1}$. 
To get an integer solution online, we use randomized rounding with the
method of alterations:
we round the increments in variables at each step, and  make corrections in the case of failures by greedily selecting elements until our solution is feasible. Our algorithm always outputs a feasible solution; the challenge is to bound its expected cost: the crucial technical component is a relationship
between the submodular cover LP and the multilinear relaxation, which allows us to
show that a logarithmic number of rounds suffice.

\subsection{An LP for Submodular Cover.}

How can we frame submodular cover as a linear program? We use the well known formulation used by Wolsey \cite{Wolsey1982} (see also \cite{Nemhauser1978, Fisher1978}), but first we define some notation for convenience. For $S\subseteq N$, define $f^{\#S}: [0,1]^{|N|}
\rightarrow \mathbb{R}^+$ as:
\begin{align}
\ts f^{\#S}(x) := f(S) + \sum_{j \in N} f_S(j) x_j \label{line:fsharp}
\end{align}
We define the \textit{covering extension} of $f$ as:
\begin{align}
f^*(x) := \min_{S \subseteq N} f^{\#S}(x) \label{line:fstar}
\end{align}
When $f$ is submodular, the function $f^*$ is a continuous
extensions of $f$: i.e., if $\chi_T$ is the characteristic vector of a
set $T \subseteq N$, then $f^{*}(\chi_T) = F(\chi_T) = f(T)$. We will also sometimes need to extend definitions \eqref{line:fsharp} and \eqref{line:fstar} to contractions of functions.
\begin{align*}
f^{\#S}_T(x) &:= f_T(S) + \sum_{j \in N} f_{S \cup T}(j) x_j \\
f^*_T(x) &:= \min_{S \subseteq N} f_T^{\#S}(x)
\end{align*}
 
Now we can write the submodular cover LP as:
\begin{gather*}
\boxed{
\begin{array}{lll}
& \displaystyle \min  \sum_j c(j)\, x_j & \\
\text{s.t.\quad} &  \displaystyle \sum_{j \in N} f_S(j)\, x_j \geq f_S (N) & \forall S \subseteq N \\
&1 \geq x_j \geq 0 & \forall j
\end{array}
}
\intertext{or in our notation:}
\boxed{
	\begin{array}{lll}
	& \displaystyle \min  \sum_j c(j)\, x_j & \\
	\text{s.t.\quad} &  f^*(x) \geq f(N) \\
	&1 \geq x_j \geq 0 & \forall j
	\end{array}
}
\label{eq:submodCoverLP}
\tag{$P$}
\end{gather*}
The integer solutions of this LP are precisely the solutions to \ssc:
\begin{lemma}[Proposition 2 of \cite{Wolsey1982}]
	\label{lem:integerpts}
	A set $T$ has $f(T) = f(N)$ if and only if $\chi_T$, the
	characteristic vector of $T$, is a feasible integer solution to
	\eqref{eq:submodCoverLP}.
\end{lemma}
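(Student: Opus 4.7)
The plan is to prove both directions by directly manipulating the constraints of \eqref{eq:submodCoverLP}, exploiting submodularity for one direction and making a clever choice of $S$ for the other.

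For the forward direction ($f(T)=f(N) \Rightarrow \chi_T$ feasible), I would fix an arbitrary $S \subseteq N$ and verify the corresponding constraint at $x=\chi_T$. Since $f_S$ is submodular with $f_S(\emptyset) = 0$, subadditivity (obtained by iterating submodularity) gives
\[
\sum_{j \in T} f_S(j) \;\geq\; f_S(T) \;=\; f(S\cup T) - f(S).
\]
Now monotonicity of $f$ combined with $f(T) = f(N)$ forces $f(S \cup T) = f(N)$: it is at most $f(N)$ by monotonicity and at least $f(T) = f(N)$ since $T \subseteq S \cup T$. Thus $f_S(T) = f(N) - f(S) = f_S(N)$, so the constraint $\sum_{j \in N} f_S(j)\,(\chi_T)_j \geq f_S(N)$ holds. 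Since $S$ was arbitrary and $\chi_T \in \{0,1\}^N$ trivially satisfies the box constraints, $\chi_T$ is feasible.

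For the backward direction ($\chi_T$ feasible $\Rightarrow f(T) = f(N)$), the key observation is to plug in the specific choice $S = T$ into the constraint family. Because $f_T(j) = f(T \cup \{j\}) - f(T) = 0$ whenever $j \in T$, the left-hand side $\sum_{j \in N} f_T(j)\,(\chi_T)_j = \sum_{j \in T} f_T(j)$ collapses to $0$. Feasibility therefore yields $0 \geq f_T(N) = f(N) - f(T)$, i.e.\ $f(T) \geq f(N)$, and monotonicity supplies the reverse inequality.

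Neither direction presents a genuine obstacle: the forward direction is a routine application of submodularity and monotonicity, while the backward direction is essentially a one-line argument once one thinks to test the constraint at $S = T$, which exactly zeros out the left-hand side. The only subtlety worth emphasizing in the write-up is that the constraint family is indexed by \emph{all} $S \subseteq N$, and in particular includes $S = T$; this is precisely what makes the exponential-sized LP encode feasibility of the combinatorial problem exactly, rather than merely as a relaxation.
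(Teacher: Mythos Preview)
Your proof is correct in both directions. The paper does not supply its own proof of this lemma; it simply cites it as Proposition~2 of Wolsey~\cite{Wolsey1982}, so there is nothing to compare against beyond noting that your argument is the standard one and would serve perfectly well as a self-contained proof here.
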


Let $(P^{(t)})$ refer to the program \eqref{eq:submodCoverLP} for $f^{(t)}$. We would like to argue that we can maintain one large LP for
\ossc. At time $t$, we feed in the constraints of $(P^{(t)})$, and
update the solution accordingly. We use the following online LP solver from~\cite[Theorem 4.2]{buchbinder2009online}:
\begin{theorem}
	\label{thm:buchbinder2009}
	For the setting where the rows of a covering LP $\min\{ c^\intercal x \mid Ax \geq b \}$ with $A \in \R_+^{m\times n}$, $b \in \R^m_+$, $c \in \R^n_+$ arrive online, there is an algorithm that achieves a competitive ratio of $O(\ln n)$. Furthermore, the sequence of solutions produced $x_1, x_2, \ldots, x_T$ is monotonically increasing.
\end{theorem}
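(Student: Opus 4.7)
The plan is to prove this via the online primal-dual framework. Writing the dual of the covering LP
\begin{align*}
\max\ b^\intercal y \quad \text{s.t.}\quad A^\intercal y \le c,\ y \ge 0,
\end{align*}
the idea is to maintain a feasible primal and an approximately-feasible dual simultaneously, with the primal cost always comparable to the dual objective. When the $k$-th row $A_k x \ge b_k$ arrives, check whether it is already satisfied by the current $x$; if so, do nothing. Otherwise, I would raise the newly arrived dual variable $y_k$ continuously (in a fictitious time $\tau$ at unit rate) while simultaneously updating each primal variable $x_j$ with $A_{kj}>0$ according to the multiplicative-additive rule
\begin{equation*}
\frac{d x_j}{d\tau} \;=\; \frac{A_{kj}}{c_j}\Bigl(x_j + \tfrac{1}{n}\Bigr),
\end{equation*}
stopping as soon as the new constraint becomes tight. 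Monotonicity of $x_1, x_2, \ldots$ is immediate from this rule, since the right-hand side is nonnegative and we never decrease any $x_j$. The additive $\tfrac{1}{n}$ term is there to give variables that start at $0$ a way to ``ignite.''

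The key step is the charging: compare the instantaneous primal and dual cost rates during an update. The dual rate is $b_k$, while the primal rate is $\sum_j c_j\,(dx_j/d\tau) = \sum_{j:A_{kj}>0} A_{kj}(x_j + 1/n)$. Since the update continues only while $A_k x < b_k$, this primal rate is at most $b_k + \tfrac{1}{n}\sum_j A_{kj}$, and after bucketing by the relative size of entries one absorbs the additive term into an $O(1)$ overhead per arrival. Integrating shows that the total primal cost paid is at most a constant times the dual objective $b^\intercal y$.

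The remaining, and main, step is showing that the dual is approximately feasible, namely $A^\intercal y \le O(\ln n)\cdot c$ at all times. Here one exploits the multiplicative structure of the update: solving the ODE for $x_j$ while a single row $k$ is being processed gives
\begin{equation*}
x_j(\tau) + \tfrac{1}{n} \;=\; \bigl(x_j(0) + \tfrac{1}{n}\bigr)\,\exp\!\Bigl(\int \tfrac{A_{kj}}{c_j}\,d\tau\Bigr),
\end{equation*}
and summing over all rows the exponent becomes $\sum_k A_{kj}\,y_k/c_j$. Because $x_j \le 1$ is maintained (one caps the update), $\tfrac{1}{n}\exp(\sum_k A_{kj} y_k/c_j) \le 2$, so $\sum_k A_{kj} y_k \le c_j \ln(2n)$. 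Scaling $y$ down by $\Theta(\ln n)$ therefore produces a feasible dual whose value lower-bounds $\opt$ by weak duality. Combining with the charging step yields the $O(\ln n)$ competitive ratio. The obstacle I would expect is not any single step but the bookkeeping needed to ensure that the cap $x_j \le 1$ does not destroy the charging (once $x_j$ hits $1$ the update is frozen for that variable, and one must verify that frozen variables still contribute the required $A_{kj} x_j$ to feasibility so that the stopping time is well-defined).
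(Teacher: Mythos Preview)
The paper does not prove this theorem at all: it is quoted verbatim as \cite[Theorem~4.2]{buchbinder2009online} and used as a black box. So there is no ``paper's own proof'' to compare against; what you have written is a sketch of the Buchbinder--Naor primal-dual argument itself, and at that level your outline is the right one.

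That said, your sketch has a genuine gap in the charging step that is specific to \emph{general} covering LPs (as opposed to $0$--$1$ set cover). With the update rule $dx_j/d\tau = (A_{kj}/c_j)(x_j + 1/n)$, the primal rate is $\sum_j A_{kj}x_j + \tfrac{1}{n}\sum_j A_{kj}$, and you correctly note the first term is at most $b_k$ while the update runs. But the second term $\tfrac{1}{n}\sum_j A_{kj}$ is \emph{not} $O(b_k)$ for arbitrary nonnegative $A$; it can be arbitrarily large. Your ``bucketing by relative size of entries'' does not fix this, and in fact no post-hoc bookkeeping can: the additive term in the update rule itself has to depend on the row. The standard remedy in Buchbinder--Naor is to normalize each arriving row so that $b_k = 1$ and then use an additive term scaled by $1/(n \cdot \max_j A_{kj})$ (equivalently, replace $1/n$ by $b_k/(n\cdot \max_j A_{kj})$); with this change the primal rate is at most $2b_k$ and the rest of your argument goes through.

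A related issue is your cap $x_j \le 1$. The LP in the theorem statement has no box constraints, and for general $A,b$ a feasible solution may require $x_j > 1$ (e.g.\ the single constraint $\tfrac{1}{10}x_1 \ge 1$). So you cannot simply freeze $x_j$ at $1$ and still guarantee the stopping time is finite. After the normalization above, the bound $x_j = O(1)$ that you need for dual feasibility follows from the stopping condition (once $x_j \ge b_k/A_{kj}$ for the $j$ attaining $\max_j A_{kj}$, the constraint is satisfied), not from an artificial cap.
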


Each of the linear programs $(P^{(t)})$ is indeed a covering LP with box
constraints that is a feasible relaxation for the function
$f^{(t)}$. However it is not \emph{a priori} clear that the constraints for
the linear program $(P^{(t')})$ for some $t' < t$ are also valid for
the linear program $(P^{(t)})$. We prove a lemma which circumvents this issue. While we do
not guarantee that for all $t' < t$ the constraints of $(P^{(t')})$ will
be valid for the LP $(P^{(t)})$ itself, we show that they are valid for the
\textit{integer} solutions of $(P^{(t)})$. This suffices to ensure that
the union of all constraints seen so far is a relaxation for \ossc. 
\begin{lemma}
Any feasible \underline{integer} solution to $(P^{(t+1)})$ is a feasible \underline{integer} solution to $(P^{(t)})$.
\end{lemma}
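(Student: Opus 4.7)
The plan is to derive the claim directly from two ingredients already in hand: Wolsey's characterization of integer feasibility (\cref{lem:integerpts}) and the time-monotonicity assumption on the input sequence $\{f^{(t)}\}$ given in the definition of time-monotone sequences.

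First I would translate both sides of the implication into statements about sets rather than about LP feasibility. By \cref{lem:integerpts}, a $0/1$-vector $x$ is feasible for $(P^{(s)})$ iff $x = \chi_T$ for some $T \subseteq N$ with $f^{(s)}(T) = f^{(s)}(N)$. So it suffices to show: if $T \subseteq N$ satisfies $f^{(t+1)}(T) = f^{(t+1)}(N)$, then $f^{(t)}(T) = f^{(t)}(N)$. This is exactly the defining property of a time-monotone sequence (applied at index $t+1$), so the implication follows immediately. Packaging this back up via \cref{lem:integerpts} applied to $(P^{(t)})$ finishes the argument.

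The only thing worth flagging is why the statement is restricted to \emph{integer} solutions. In general, a fractional $x \in [0,1]^N$ feasible for $(P^{(t+1)})$ need not satisfy all the constraints $\sum_j f^{(t)}_S(j)\, x_j \geq f^{(t)}_S(N)$ indexed by $S \subseteq N$, because the coefficients $f^{(t)}_S(j)$ and $f^{(t+1)}_S(j)$ are different submodular functions and time-monotonicity only constrains where they attain their maximum value, not how they evolve pointwise. The passage to integer solutions is essential: once $x = \chi_T$, the LP constraints collapse to the single condition $f^{(s)}(T) = f^{(s)}(N)$, at which point time-monotonicity is precisely the right hypothesis. This is not so much an obstacle as a subtlety to state explicitly, since the lemma is the mechanism by which the algorithm can add constraints online while still maintaining that the union of constraints seen so far forms a valid integer-relaxation of \ossc.
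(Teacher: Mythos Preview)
Your proof is correct and follows exactly the same approach as the paper: apply \cref{lem:integerpts} to translate integer feasibility for $(P^{(t+1)})$ into the condition $f^{(t+1)}(T)=f^{(t+1)}(N)$, invoke time-monotonicity to obtain $f^{(t)}(T)=f^{(t)}(N)$, and apply \cref{lem:integerpts} again for $(P^{(t)})$. Your added paragraph explaining why the restriction to integer solutions is essential is a helpful clarification beyond what the paper states, but the core argument is identical.
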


\begin{proof}
Suppose $x$ is a feasible integer solution to $(P^{(t+1)})$. Then by
\cref{lem:integerpts}, $x$ is the characteristic vector of a set $S$
such that $f^{(t+1)}(S) = f^{(t+1)}(N)$. By the time-monotonicity of the
sequence $f^{(0)}, f^{(1)}, \ldots, f^{(T)}$, it holds that $f^{(t)}(S)
= f^{(t)}(N)$ as well. Now again using \cref{lem:integerpts}, $x$ is
feasible to $(P^{(t)})$. 
\end{proof}
\subsection{Rounding Fractional Solutions.}

The last remaining step in our plan is to understand the behavior of
randomized rounding for \eqref{eq:submodCoverLP}. Since the multilinear
extension $G$ is defined as the value of $g$ on a set obtained by randomized
rounding, we can use the following relationship between $G$ and $g^*$~\cite[Lemma 3.8]{vondrak2007submodularity}:
\begin{lemma}
	\label{lem:multilinFstar}
	For any monotone, nonnegative set function $g$, it holds that $G(x) \geq (1-e^{-1}) g^*(x)$.
      \end{lemma}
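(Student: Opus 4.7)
The plan is to derive and then integrate a pointwise differential inequality for the one-parameter family $\phi(t) := G(tx)$, $t \in [0,1]$. Since $G$ is multilinear, $\phi'(t) = \sum_j x_j\,\partial_j G(tx)$, each $\partial_j G(tx) \geq 0$ by monotonicity of $g$, and the boundary values are $\phi(0) = g(\emptyset) \geq 0$ and $\phi(1) = G(x)$. Establishing $\phi'(t) + \phi(t) \geq g^*(x)$ for all $t$ will therefore suffice.

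To prove that inequality, at each fixed $t$ I would exhibit a set $T$ witnessing $g^{\#T}(x) \leq \phi(t) + \phi'(t)$; this is enough because $g^*(x) = \min_{T'} g^{\#T'}(x) \leq g^{\#T}(x)$. I produce such a $T$ by a simple averaging argument: draw $T$ at random by including each $j$ independently with probability $tx_j$, and compute the expectation of $g^{\#T}(x) = g(T) + \sum_j g_T(j)\,x_j$. Because $g_T(j) = g(T\cup j) - g(T)$ vanishes whenever $j \in T$, and conditional on $j \notin T$ the remaining coordinates of $T$ still have marginals $tx_i$ for $i \neq j$, the standard identification $\partial_j G(tx) = \expectarg{g(R \cup j) - g(R)}{R \sim tx|_{\bar j}}$ gives $\expectarg{g_T(j)}{T \sim tx} = (1 - tx_j)\,\partial_j G(tx)$, so
\begin{align*}
\expectarg{g^{\#T}(x)}{T \sim tx}
&= \phi(t) + \sum_j x_j\,(1 - tx_j)\,\partial_j G(tx) \\
&\leq \phi(t) + \sum_j x_j\,\partial_j G(tx) \;=\; \phi(t) + \phi'(t),
\end{align*}
where the inequality uses $1 - tx_j \leq 1$ together with $x_j\,\partial_j G(tx) \geq 0$. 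Since the minimum is at most the expectation, $g^*(x) \leq \expectarg{g^{\#T}(x)}{T \sim tx} \leq \phi(t) + \phi'(t)$.

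Finally, the differential inequality rewrites as $(e^t \phi(t))' \geq e^t g^*(x)$. Integrating from $0$ to $1$ gives $e\cdot\phi(1) - \phi(0) \geq (e-1)\,g^*(x)$, and since $\phi(0) \geq 0$ we conclude $G(x) = \phi(1) \geq (1 - e^{-1})\,g^*(x)$, as claimed. The only mildly delicate point I anticipate is correctly computing $\expectarg{g_T(j)}{T \sim tx}$: one must account for the indicator $\mathbf{1}[j \notin T]$, which produces the factor $(1 - tx_j)$ that is then discarded by a trivial bound. Notably, the argument invokes only monotonicity and nonnegativity of $g$, matching the stated generality of the lemma.
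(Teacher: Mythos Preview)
Your argument is correct. Note that the paper itself does not prove this lemma but cites \cite[Lemma~3.8]{vondrak2007submodularity}; the underlying technique, however, appears in the paper's proof of \cref{lem:findViolated}. There the one-parameter family is $\phi(t)=\expect{g(C(t))}$ with $C(t)=\{i:h_i<t\}$ the exponential-clock process, $h_i\sim\Exp(x_i)$; the memoryless property yields the exact identity $\phi'(t)=\expect{g^{\#C(t)}(x)}-\phi(t)$, which is then integrated against the factor $e^t$ exactly as you do. Your choice $\phi(t)=G(tx)$ is a genuinely different parameterization: in place of the Poisson-process increment computation you use the multilinear derivative formula together with an averaging argument over $T\sim tx$, and you discard the harmless factor $(1-tx_j)\le 1$ to obtain the differential inequality. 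Your route is a bit more elementary---no continuous-time process is needed---while the exponential-clock version has the advantage that the differential relation is an \emph{equality}, which is precisely what the paper exploits constructively in \cref{lem:findViolated} to extract a specific witnessing set $C$.
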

Observe that \cref{lem:multilinFstar} does not require that $g$ be
submodular, only that it be monotone and nonnegative. (There exist functions for which the bound is tight. If $g$ is also
		submodular, we also get the bound $G(x) \leq
		g^*(x)$~\cite[Lemma~3.7]{vondrak2007submodularity}.)
The next lemma helps us analyze random rounding:
\begin{lemma}[Rounding Lemma]
	\label{lem:roundingLemma}
	Let $g$ be a monotone, nonnegative set function such that $g(\emptyset) = 0$, and let $k$ be a positive integer. Let $R_1, \ldots R_k$ be a sequence of random sets and denote $R_{1:\ell} := \bigcup_{i \in [\ell]} R_i$. Suppose the following condition holds for all $\ell \in [k]$ and all $R \subseteq N$:
 	\begin{align}
 	\expectarg{g(R_\ell \mid R_{1:\ell-1}) \mid R_{1:\ell-1} = R}{R_{\ell}} \geq (1 - \gamma) g(N \mid R) \label{line:roundingLemma_roundingassump}
 	\end{align}
 	 Then:
	\begin{align}
	\expectarg{g(N \mid R_{1:k})}{R_{1:k}} \leq \gamma^k \cdot g(N) \label{line:roundingLemma_lemmaConc}
	\end{align}
\end{lemma}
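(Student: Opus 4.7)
The plan is to define the expected residual coverage $\phi_\ell := \expect{g(N \mid R_{1:\ell})}$ and show the one-step contraction $\phi_\ell \leq \gamma \cdot \phi_{\ell-1}$; the claimed bound then follows by induction on $\ell$ from $\phi_0 = g(N)$.

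The first step is to rewrite the hypothesis \eqref{line:roundingLemma_roundingassump} in a more convenient form. Since $R_{1:\ell} = R_{1:\ell-1} \cup R_\ell$ and $g$ is monotone (and $R_\ell \subseteq N$, so adding $N$ is a no-op for $g$ on these sets), the marginal $g(R_\ell \mid R_{1:\ell-1})$ equals $g(R_{1:\ell}) - g(R_{1:\ell-1})$, and similarly $g(N \mid R) = g(N) - g(R)$. So, conditional on $R_{1:\ell-1} = R$,
\begin{align*}
\expectarg{g(R_{1:\ell})}{R_\ell} &\geq g(R) + (1-\gamma)(g(N) - g(R)) \\
&= (1-\gamma)\, g(N) + \gamma\, g(R).
\end{align*}
Subtracting both sides from $g(N)$ gives the equivalent ``residual'' form
\[
\expectarg{g(N \mid R_{1:\ell}) \,\big|\, R_{1:\ell-1}=R}{R_\ell} \;\leq\; \gamma \cdot g(N \mid R).
\]

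Next I would take expectation over $R_{1:\ell-1}$ and apply the tower rule: the left side becomes $\phi_\ell$, and the right side becomes $\gamma \cdot \expect{g(N\mid R_{1:\ell-1})} = \gamma \cdot \phi_{\ell-1}$. Hence $\phi_\ell \leq \gamma \cdot \phi_{\ell-1}$ for every $\ell \in [k]$. Iterating from $\ell = k$ down to $\ell = 1$ and using $\phi_0 = g(N \mid \emptyset) = g(N)$ (because $g(\emptyset) = 0$) yields $\phi_k \leq \gamma^k \cdot g(N)$, which is exactly \eqref{line:roundingLemma_lemmaConc}.

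There is no real obstacle here: the entire argument is a one-line algebraic rearrangement of the hypothesis into a contraction statement, followed by a routine induction. The only small point to be careful about is that the hypothesis is stated for \emph{every} conditioning set $R$, so after rearranging we can take expectation over the (arbitrary) distribution of $R_{1:\ell-1}$ and pull the $\gamma$ factor out without any independence assumption on the $R_i$'s.
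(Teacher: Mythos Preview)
Your proposal is correct and is essentially identical to the paper's proof: both establish the one-step contraction $\expect{g(N\mid R_{1:\ell})}\le \gamma\,\expect{g(N\mid R_{1:\ell-1})}$ by rearranging the hypothesis and then induct using $g(\emptyset)=0$. The only cosmetic difference is that you name the potential $\phi_\ell$ explicitly, whereas the paper writes the same chain of (in)equalities inline; also note that monotonicity is not actually needed for the identity $g(R_\ell\mid R_{1:\ell-1})=g(R_{1:\ell})-g(R_{1:\ell-1})$, which follows from the definition of the contraction.
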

\begin{proof}
	By definition:
	\begin{align}
	\expectarg{g(N \mid R_{1:\ell})}{R_{1:\ell}} &= \expectarg{g(N) - g(R_{1:\ell-1}) - \expectarg{g(R_{1:\ell}) - g(R_{1:\ell-1})}{R_{\ell}}}{R_{1:\ell-1}} \notag\\
	& \leq \expectarg{g(N \mid R_{1:\ell-1}) - (1 - \gamma)(g(N \mid R_{1:\ell-1}))}{R_{1:\ell-1}} \label{line:roundingLemma_roundingworks}\\
	&= \gamma \cdot\expectarg{g(N) - g(R_{1:\ell-1})}{R_{1:\ell-1}} \notag
	\end{align}
	Note that \eqref{line:roundingLemma_roundingworks} holds by assumption \eqref{line:roundingLemma_roundingassump}. Claim \eqref{line:roundingLemma_lemmaConc} holds by induction on $\ell$ and the fact that $g(\emptyset) = 0$.
\end{proof}
\begin{lemma}
	\label{lem:fNfminround}
	Let $x \in [0,1]^n$ be a feasible solution to \eqref{eq:submodCoverLP}. Let $R$ be a set obtained by performing $k$ rounds of randomized rounding according to $x$. Then: $\expectarg{f_R(N)}{R} \leq e^{-k} f(N)$.
\end{lemma}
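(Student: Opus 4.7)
The plan is to apply the Rounding Lemma to $g = f$ with $\gamma = 1/e$. Since $(1/e)^k = e^{-k}$, this gives exactly the desired bound $\mathbb{E}[f_R(N)] \leq e^{-k} f(N)$. The task reduces to verifying the hypothesis of the Rounding Lemma: for every conditioning set $R \subseteq N$ of previously sampled elements, the next round $R_\ell$ (which includes each $j$ independently with probability $x_j$) satisfies
\[
\expectarg{f(R_\ell \mid R)}{R_\ell} \;\geq\; \bigl(1 - \tfrac{1}{e}\bigr)\, f(N \mid R).
\]

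The left-hand side is, by definition of the multilinear extension of the contracted function, equal to $F_R(x)$. Applying \cref{lem:multilinFstar} to the set function $f_R$ (which is monotone, nonnegative, and satisfies $f_R(\emptyset)=0$) yields $F_R(x) \geq (1 - e^{-1}) f_R^*(x)$. So it suffices to prove that $f_R^*(x) \geq f_R(N)$, i.e., that $x$ remains feasible for the submodular cover LP associated with $f_R$.

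To see this, I would take an arbitrary $S \subseteq N$ and use the constraint of \eqref{eq:submodCoverLP} corresponding to the set $R \cup S$, namely
\[
f(R \cup S) + \sum_{j \in N} f_{R \cup S}(j)\, x_j \;\geq\; f(N).
\]
Subtracting $f(R)$ from both sides and using $f_R(S) = f(R \cup S) - f(R)$ and $f_R(N) = f(N) - f(R)$ gives $f_R^{\#S}(x) \geq f_R(N)$. Taking the minimum over $S$ yields $f_R^*(x) \geq f_R(N)$, as desired. Chaining the two inequalities, $F_R(x) \geq (1 - e^{-1}) f_R^*(x) \geq (1 - e^{-1}) f_R(N)$, verifies the Rounding Lemma hypothesis with $\gamma = 1/e$, and the conclusion $\expectarg{f_R(N)}{R} \leq e^{-k} f(N)$ follows.

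The step I expect to require the most care is the transfer of LP feasibility from the original instance to the contracted instance. It is essentially a one-line calculation once one thinks to invoke the constraint indexed by $R \cup S$ rather than by $S$, but the rest of the proof hinges on it — without it, there is no link between the LP and the multilinear extension of $f_R$ that \cref{lem:multilinFstar} speaks about.
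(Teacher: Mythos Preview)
Your proof is correct and follows exactly the paper's approach: verify the Rounding Lemma hypothesis with $\gamma = e^{-1}$ by combining \cref{lem:multilinFstar} for $f_R$ with the inequality $f_R^*(x) \geq f_R(N)$. The paper asserts the latter inequality in one line (``Since $x$ is feasible to $(P)$, for all $S\subseteq N$, we have $f_S^*(x) \geq f_S(N)$''), whereas you spell out the underlying calculation via the constraint indexed by $R\cup S$; otherwise the arguments are identical.
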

\begin{proof}
	Since $x$ is feasible to $(P)$, for all $S\subseteq N$, we have $f_S^*(x) \geq f_S(N)$.	By \cref{lem:multilinFstar} applied to $f_S$, it holds that $F_S(x) \geq (1 - e^{-1}) f_S^*(x)$. Together these imply: 
	\begin{align}
	F_S(x) \geq (1 - e^{-1}) f_S(N)
	\end{align}
	Thus $x$, $f$ and $R$ together satisfy the conditions of the \nameref{lem:roundingLemma} with $\gamma = e^{-1}$, and the claim follows directly.
\end{proof}

\subsection{The Analysis.}

To summarize, our first algorithm for \ossc is the following: maintain a
fractional solution $x$, and two sets $R, G \subseteq N$ initially
empty. At time $t$, feed the batch of constraints of $(P^{(t)})$ to
\cite{buchbinder2009online} and update $x$ accordingly. Set $k := \ln(t^2 \cdot f(N) / \fmin)$, and for every $j\in N$, add $j$ to $R$ with probability $\Delta_j$, where $\Delta_j$ is calculated such that the total probability that $j$ gets added to $R$ over the course of the algorithm is $1 - (1 - x_j)^k$. In a subsequent greedy phase, iteratively select the cheapest element $j \in N$ with non-zero marginal
value and add it to the set $G$ until no such elements remain. Return
$R \cup G$. 

Given our tools from the previous section, we can now
analyze the expected cost.

	\begin{theorem}
		\label{thm:slowOssc}
		There exists a randomized algorithm for the \ossc problem
		which guarantees that for each $T$, the expected cost of solution
		$S_T$ is within $O(\ln n \cdot \ln (T\cdot f(N) / \fmin))$ of the
		optimal submodular cover solution for $f^{(T)}$.
	\end{theorem}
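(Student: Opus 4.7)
The plan is to decompose the expected cost of $S_T = R \cup G$ into the cost of the randomly rounded set $R$ and the cost of the greedy alteration set $G$, and bound each by $O(\ln n \cdot \ln(T \cdot f(N)/\fmin)) \cdot \opt^{(T)}$.

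First, I would handle the fractional cost. The preceding lemma shows that the union of constraints generated by $(P^{(1)}), \ldots, (P^{(T)})$ is a valid relaxation of \ossc, so by \Cref{thm:buchbinder2009}, the fractional solution $x^{(T)}$ produced by Buchbinder--Naor satisfies $c^\intercal x^{(T)} \le O(\ln n) \cdot \opt^{(T)}$ and is monotone in $t$. Monotonicity of both $x_j^{(t)}$ and $k_t$ in $t$ ensures that the cumulative probability $1-(1-x_j^{(t)})^{k_t}$ is monotone, so the incremental $\Delta_j$ of the rounding procedure is well-defined and nonnegative.

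Next, I would bound $\expect{c(R)}$. Each element $j$ appears in $R$ by time $T$ with probability $1-(1-x_j^{(T)})^{k_T} \le k_T\, x_j^{(T)}$ by Bernoulli. Linearity of expectation then gives
\[
\expect{c(R)} \le k_T \cdot c^\intercal x^{(T)} \le O(\ln n) \cdot \ln\!\big(T^2 f(N)/\fmin\big) \cdot \opt^{(T)} = O\big(\ln n \cdot \ln(T f(N)/\fmin)\big) \cdot \opt^{(T)}.
\]

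Now I turn to the harder part, bounding $\expect{c(G)}$. The crucial point is that, although $R$ is constructed incrementally, the marginal distribution of $R_t$ at time $t$ is exactly that of performing $k_t$ independent IID rounds of randomized rounding with respect to $x^{(t)}$: each element independently lies in $R_t$ with probability $1-(1-x_j^{(t)})^{k_t}$. Since $x^{(t)}$ is feasible for $(P^{(t)})$, \Cref{lem:fNfminround} applied to $f^{(t)}$ yields $\expect{f^{(t)}_{R_t}(N)} \le e^{-k_t} f^{(t)}(N) \le \fmin/t^2$, where I use $f^{(t)}(N) \le f(N)$ and the choice of $k_t$. By definition of $\fmin$ the nonzero marginal $f^{(t)}_{R_t}(N)$ is at least $\fmin$, so Markov's inequality yields $\prob{f^{(t)}_{R_t}(N) > 0} \le 1/t^2$.

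The greedy phase at time $t$ runs only when $f^{(t)}_{R_t \cup G_{t-1}}(N) > 0$, which is contained in the event $\{f^{(t)}_{R_t}(N) > 0\}$. When greedy does run, Wolsey's classical greedy analysis bounds the cost of augmenting $R_t \cup G_{t-1}$ to a feasible cover of $f^{(t)}$ by $O\!\big(\ln(f^{(t)}_{R_t \cup G_{t-1}}(N)/\fmin)\big) \cdot \opt^{(t)} \le O(\ln(f(N)/\fmin)) \cdot \opt^{(T)}$, where $\opt^{(t)} \le \opt^{(T)}$ by time-monotonicity (any cover of $f^{(T)}$ is a cover of $f^{(t)}$). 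Summing over $t$ gives
\[
\expect{c(G)} \le \sum_{t=1}^T \frac{1}{t^2} \cdot O(\ln(f(N)/\fmin)) \cdot \opt^{(T)} = O(\ln(f(N)/\fmin)) \cdot \opt^{(T)}.
\]

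Adding the two contributions yields the claimed $O(\ln n \cdot \ln(T f(N)/\fmin))$ competitive ratio. The main conceptual obstacle is justifying that the incremental construction of $R$ can be analyzed as if it were $k_t$ fresh IID rounds each time, which is what lets us invoke \Cref{lem:fNfminround} at every $t$; the remaining subtlety is the Wolsey-style bound on the greedy cost conditional on the (low-probability) failure of the rounding, combined with the right choice of $k_t$ so that the failure probability sums to $O(1)$.
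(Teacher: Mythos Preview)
Your proposal is correct and follows the same architecture as the paper: solve the covering LP online via Buchbinder--Naor, do $k_t$ rounds of independent randomized rounding, then greedily alter. The one place you diverge is in bounding the alteration cost $\expect{c(G)}$. You bound $\prob{f^{(t)}_{R_t}(N)>0}\le 1/t^2$ via Markov and then invoke Wolsey's greedy approximation to pay $O(\ln(f(N)/\fmin))\cdot\opt$ on that event, yielding $\expect{c(G)}=O(\ln(f(N)/\fmin))\cdot\opt$. The paper instead uses a \emph{cheapest-first} greedy (repeatedly add the cheapest element with nonzero marginal) and a direct counting argument: any such element costs at most $c(\opt)$ (some element of $\opt$ still has nonzero marginal, hence is at least as expensive as the one chosen), and the number of elements added is at most $f_R(N)/\fmin$ deterministically, so $\expect{c(G_t)}\le c(\opt)\cdot\expect{f_R(N)}/\fmin\le c(\opt)/t^2$ without passing through a probability bound. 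This gives the tighter $\expect{c(G)}=O(\opt)$; your extra $\ln(f(N)/\fmin)$ factor is harmless here since the rounding cost already dominates, but the paper's argument is more elementary and avoids appealing to Wolsey's analysis as a black box.
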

	\begin{proof}
		By construction, the algorithm always produces a feasible solution to the submodular cover problem. Thus it suffices to relate the cost of the output to the cost of the optimal solution, $c(\opt)$.
		
		To bound the cost of the sampling phase, remark that at time $t$, by construction we can view $R$ as the result of $k$ rounds of independent random rounding. The expected cost after one round of rounding is precisely $c(x)$. By linearity of expectation, the total expected cost of $R$ is $\ln \left(t^2\cdot f(N) / \fmin \right)\cdot c(x)$. 

		Next we bound the cost of the greedy phase. 
		At time $t$, the solution $x$ is feasible to $(P^{(t)})$. By \cref{lem:fNfminround} with parameter $k= \ln \left(t^2\cdot f(N) / \fmin\right)$:
		\begin{align*}
		\expectarg{f_R(N)}{R} \leq t^{-2} \left(\frac{\fmin}{f(N)}\right) f(N) = t^{-2} \fmin
		\end{align*}
		This gives a bound on the expected cost of alterations at time $t$. Let $G_t$ denote the elements added in the greedy phase at time $t$:
		\begin{align}
		\expectarg{c(G_{t})}{R} &\leq c(\opt) \cdot \expectarg{|G_t|}{R} \label{line:fNround_cheaperthanopt}\\
		&\leq t^{-2} c(\opt) \label{line:fNround_expectedgreedy}
		\end{align}
		\eqref{line:fNround_cheaperthanopt} comes from the fact that each element must be cheaper than $c(\opt)$, the second step \eqref{line:fNround_expectedgreedy} from the fact that for every element $h \in G$, both $f_{R}(h) \geq \fmin$ and $\expectarg{f_R(N)}{R} \leq t^{-2}\fmin$. By linearity of expectation, the expected cost of cumulative alterations over all time steps $t \in [T]$ is $O(c(\opt))$. In conclusion, the final solution has expected cost at most $ \ln \left(T\cdot f(N) / \fmin \right) c(x) + O(c(\opt))$. By \hyperref[thm:buchbinder2009]{Theorem 2.1}, $c(x) = O(\ln n) \cdot c(\opt)$, hence the algorithm outputs a solution with competitive ratio $O(\ln n \ln \left(T \cdot f(N) / \fmin \right)) \cdot  c(\opt)$.
	\end{proof}

In the next section, we show how to replace the $f(N)$ term in the
competitiveness by an $\fmax$ term, which is often much smaller. In
order to do the finer analysis, we will also show how to use the ideas
of mutual coverage. We then make these algorithms efficient in the subsequent
section.

\section{An $O\left(\ln n \ln \left( T \cdot \fmax / \fmin \right) \right)$ competitive algorithm for $\mathcal{D}_3^+$}

\label{subsec:betterround}
In this section, we show how to recover the finer $O\left(\ln n \ln \left( T \cdot \fmax / \fmin \right)\right)$ approximation for the special case of $3$-increasing functions. 
The main algorithmic change is to perform $\ln \left( T \cdot \fmax / \fmin \right)$ rounds of rounding instead of $\ln \left( T \cdot f(N) / \fmin \right)$, and then compensate with more greedy alterations. The intuition behind the analysis is a fractional charging scheme. Using mutual coverage as a notion of contribution, we argue that:
\begin{enumerate}[(i)]
	\item every greedily chosen element is cheaper than the elements in $\opt$ to which it contributes.
	\item every greedily chosen element contributes at least a minimum amount overall.
	\item for every element in $\opt$, the total contribution it receives from greedily chosen elements is bounded above.
\end{enumerate}
We show that these statements together imply a good bound on the
expected cost of alterations. However, this proof strategy requires that
the mutual coverage between pairs of elements decrease sufficiently
after randomized rounding. In \cref{section:appendixStrongSubmod}, we
show that the constraints of the LP \eqref{eq:submodCoverLP} do
not necessarily guarantee that this is the case;
hence we introduce additional constraints that enforce the desired property.

\subsection{A Stronger Formulation.}

Recall our definition of \hyperref[def:mutualCoverage]{mutual coverage}
$\mathcal{I}_f(A;B) = f(A)+f(B)-f(A\cup B)$ from \cref{subsec:notation}. We begin by strengthening LP \eqref{eq:submodCoverLP} with additional constraints as follows:
\begin{align*}
\boxed{
\begin{array}{lll}
\displaystyle \min \sum_j c(j) x_j \quad \text{s.t.} &\\
\displaystyle \sum_{j \in N} \mutcov^{(v\mid S)}(j)  x_j \geq \mutcov^{(v\mid S)}(N) & \forall v \in N, \forall S\subseteq N & (*)\\
1 \geq x_j \geq 0 & \forall j 
\end{array}
}
\label{eq:newRelaxation}
\tag{$Q$}
\end{align*}
Note that by assumption, that $f \in $, the function $\mutcov^{(\{v\})}$ is submodular, and hence the program above is the intersection of several submodular cover LPs (one for each $v$).

To see that the new constraints imply the constraints of \eqref{eq:submodCoverLP}, consider any set $S$. Order the elements of $N$ arbitrarily $v_1, v_2 \ldots, v_n$ and define $S_i := S \cup \{v_1, \ldots v_{i-1}\}$ with $S_1 := S$. If we sum the (\hyperref[eq:newRelaxation]{$*$}) constraints corresponding to all pairs $(v_i, S_i)$, we obtain:
\begin{align}
\sum_{i \in [n]} \sum_{j \in N} \mutcov(v_i; j \mid S_i) \cdot x_j &\geq \sum_{i \in [n]} \mutcov(v_i; N \mid S_i) \notag\\
\Rightarrow \sum_{j \in N} \mutcov(N; j \mid S) \cdot x_j &\geq \mutcov(N; N \mid S) \label{line:usingChainRuleMutCov} \\
\Rightarrow \sum_{j \in N} f_S(j) \cdot x_j &\geq f_S(N)\notag
\end{align}
where we used the \nameref{fact:chainRule} for mutual coverage for step \eqref{line:usingChainRuleMutCov}.

In order to treat our problem as an online covering LP and use \cite{buchbinder2009online} like we did before, we again need to make sure the new constraints do not eliminate any integer solutions.
\begin{lemma}
	\label{lem:fancyrelaxation}
	\eqref{eq:newRelaxation} is a relaxation of \ssc.
\end{lemma}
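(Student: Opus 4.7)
The plan is to show that any characteristic vector $\chi_T$ of a set $T$ with $f(T) = f(N)$ satisfies every constraint $(*)$ of \eqref{eq:newRelaxation}. Fix such a $T$ and fix an arbitrary pair $(v,S)$ with $v \in N$ and $S \subseteq N$. It suffices to prove that
\begin{align*}
\sum_{j \in T} \mutcov(v;j \mid S) \;\geq\; \mutcov(v;N \mid S).
\end{align*}

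First I would simplify the right-hand side. Since $v \in N$, we have $f_S(v \cup N) = f_S(N)$, so $\mutcov(v;N \mid S) = f_S(v)$. Moreover, since $f(T) = f(N)$ and $S, T, v \subseteq N$, monotonicity gives $f_S(T) = f_S(N)$ and also $f_S(v \cup T) = f_S(N)$, so $\mutcov(v; T \mid S) = f_S(v) = \mutcov(v; N \mid S)$. The task therefore reduces to proving
\begin{align*}
\sum_{j \in T} \mutcov(v;j \mid S) \;\geq\; \mutcov(v; T \mid S).
\end{align*}

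Next I would expand the right-hand side using the \nameref{fact:chainRule}. Picking any ordering $j_1, j_2, \ldots, j_{|T|}$ of the elements of $T$ and letting $T_{<j} := \{j_1, \ldots, j_{i-1}\}$ for $j = j_i$, the chain rule gives
\begin{align*}
\mutcov(v;T \mid S) = \sum_{j \in T} \mutcov(v; j \mid S \cup T_{<j}).
\end{align*}
So we are reduced to comparing the two sums term-by-term: it would suffice to show $\mutcov(v; j \mid S) \geq \mutcov(v; j \mid S \cup T_{<j})$ for each $j$.

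This is exactly the place where the $\mathcal{D}_3^+$ assumption enters, and is the main (and only non-trivial) step. The derivation preceding \cref{def:mutualCoverage} shows that $f \in \mathcal{D}_3^+$ is equivalent to $\mutcov(x; y \mid S) - \mutcov(x; y \mid S \cup \{z\}) \geq 0$ for all $x,y,z,S$, i.e., mutual coverage is monotonically non-increasing when one enlarges the conditioning set by a single element. Iterating this inequality $|T_{<j}|$ times along any ordering of $T_{<j}$ yields $\mutcov(v; j \mid S) \geq \mutcov(v; j \mid S \cup T_{<j})$, as needed. Summing over $j \in T$ and combining with the chain-rule identity gives the desired constraint, completing the proof. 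The essential obstacle is solely in invoking the $3$-increasing hypothesis in the right form; once we frame the inequality through the chain rule, the $\mathcal{D}_3^+$ property is precisely designed to close the gap.
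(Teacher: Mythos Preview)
Your proof is correct and follows essentially the same route as the paper: expand $\mutcov(v;T\mid S)$ via the chain rule into a sum over $j\in T$ of conditional mutual coverages, then use the $3$-increasing property (equivalently, that conditioning only decreases mutual coverage) to bound each term by $\mutcov(v;j\mid S)$. Your treatment is slightly more explicit than the paper's in justifying $\mutcov(v;T\mid S)=\mutcov(v;N\mid S)$ and in iterating the single-element conditioning inequality, but the argument is the same.
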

\begin{proof}
	Consider a feasible solution $T$ to \ssc and its corresponding characteristic vector $\chi_T$. Order the elements of $T$ arbitrarily $j_1, j_2 \ldots, j_k$ and define $S_i := S \cup \{j_1, \ldots j_{i-1}\}$ with $S_1 := S$. Then for any $S \subseteq N$ and $v \in N - S$:
	\begin{align}
	\sum_{j \in N } \mutcov(v, j \mid S) (\chi_T)_j &= \sum_{j \in T} \mutcov(v, j \mid S) \geq \sum_{i =1}^k \mutcov(v, j_i \mid S_i) \label{line:submodmutcov}\\
	&= \mutcov(v, T \mid S) \label{line:cr_again} = \mutcov(v, N \mid S). 
	\end{align}
		Above, inequality \eqref{line:submodmutcov} used the $3$-increasing property, which implies the submodularity of mutual coverage. Step \eqref{line:cr_again} used the \nameref{fact:chainRule}.
	\end{proof}

As we will soon see, the purpose of the new (\hyperref[eq:newRelaxation]{$*$}) constraints is to ensure that, just like $f_R(N)$, the expected remaining mutual coverage decreases geometrically with the number of rounds.

\subsection{The Improved Analysis.}

Our second algorithm for \ossc resembles the first; it is the analysis
that differs. Indeed, at time $t$, feed the batch of constraints of
$(Q^{(t)})$ (defined as \eqref{eq:newRelaxation} for the function $f^{(t)}$) to \cite{buchbinder2009online} and update the fractional
solution $x$ accordingly. In the rounding phase, perform
$k = \ln(t^2 \cdot \fmax / \fmin)$ rounds of randomized rounding using the appropriate probability vector $\Delta \in [0,1]^n$ (recall $\Delta_j$ is calculated such that the total probability that $j$ gets added to $R$ over the course of the algorithm is $1 - (1 - x_j)^k$) and add the new elements to the set $R$. In a subsequent greedy phase, iteratively add the cheapest
element $j \in N$ with non-zero marginal value to set $G$, until none
remain. Return $R \cup G$.

It is clear that the algorithm is correct, so it remains to bound the
expected cost. The main idea is to keep track of the amount of coverage
that the $t^{th}$ greedy element in the augmentation procedure contributes
to the coverage achieved by the $s^{th}$ element of the optimal
solution; the (conditional) mutual coverage between the two is the
quantity of interest here.

\begin{theorem}
	\label{thm:slowOnlineSSC2}
	There exists a randomized algorithm for the \ossc problem when $f \in \mathcal{D}_3^+$
	which guarantees that for each $T$, the expected cost of solution
	$S_T$ is within $O(\ln n \cdot \ln (T\cdot \fmax / \fmin))$ of the
	optimal submodular cover solution for $f^{(T)}$.
\end{theorem}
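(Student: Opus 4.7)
The plan is to follow the two-phase template of \cref{thm:slowOssc}---sampling cost plus greedy alteration cost---but to replace its coarse $f(N)$-based bound on the residual by a sharper \emph{per-element} bound on expected residual mutual coverage. Feasibility is automatic since the greedy phase halts only when no element has positive marginal. The sampling cost is handled essentially as before: $R$ at the final time $T$ can be coupled with $k := \ln(T^2 \fmax / \fmin)$ independent rounds of randomized rounding against the final fractional solution $x$, so linearity of expectation gives $\expect{c(R)} \leq k \cdot c(x) = O(\ln(T \fmax/\fmin)) \cdot c(x)$. Since $(Q^{(t)})$ is a valid online covering LP by \cref{lem:fancyrelaxation}, \cref{thm:buchbinder2009} yields $c(x) = O(\ln n) \cdot c(\opt)$, accounting for the full claimed competitive ratio. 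It remains to show that the greedy alteration cost is $O(c(\opt))$ in expectation.

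The heart of the argument is a mutual-coverage analogue of \cref{lem:fNfminround}: for every fixed $v \in N$,
\[ \expect{\mutcov(v;N\mid R)} \;\leq\; e^{-k}\,\mutcov(v;N) \;\leq\; e^{-k}\,\fmax. \]
To prove this, set $g := \mutcov^{(v)}$. Because $f \in \mathcal{D}_3^+$, $g$ is monotone submodular, and its contraction onto any $R$ coincides with $\mutcov^{(v \mid R)}$. The $(*)$ constraints of \eqref{eq:newRelaxation}, after substituting $R \cup T$ in place of the universally quantified $S$ as $T$ ranges, imply $g_R^*(x) \geq g_R(N)$ for every $R$ (using the identity $\mutcov^{(v\mid R)}(N) - \mutcov^{(v\mid R)}(T) = \mutcov^{(v\mid R\cup T)}(N)$). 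Combining with \cref{lem:multilinFstar} applied to $g_R$ gives $G_R(x) \geq (1-e^{-1}) g_R(N)$, precisely the hypothesis of the \nameref{lem:roundingLemma} with $\gamma = e^{-1}$. Taking $k_t = \ln(t^2\fmax/\fmin)$ then yields $\expect{\mutcov(v;N\mid R)} \leq \fmin / t^2$.

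With this in hand, I charge the greedy alterations fractionally to $\opt$. Fix an arbitrary ordering $\opt = \{v_1, \ldots, v_s\}$, and let $h_1, \ldots, h_m$ be the greedy picks at time $t$ with prefix sets $R_i := R \cup \{h_1, \ldots, h_{i-1}\}$. Since $\opt \cup R_i$ already covers $f^{(t)}(N)$, a short expansion gives $\mutcov(\opt; h_i \mid R_i) = f^{(t)}_{R_i}(h_i) \geq \fmin$; the \nameref{fact:chainRule} then decomposes this along the ordering as $\sum_j \mutcov(v_j; h_i \mid R_i \cup \{v_1, \ldots, v_{j-1}\})$. I distribute $c(h_i)$ across the $v_j$'s in proportion to these summands (omitting any zero terms). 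Whenever a summand is nonzero, $v_j$ still has positive marginal to $R_i$, so the greedy rule gives $c(h_i) \leq c(v_j)$; hence the charge to $v_j$ from $h_i$ is at most $(c(v_j)/\fmin) \cdot \mutcov(v_j; h_i \mid R_i \cup \{v_1, \ldots, v_{j-1}\})$. Reassembling via the chain rule in reverse, the total charge to $v_j$ is bounded by $(c(v_j)/\fmin) \cdot \mutcov(v_j; G_t \mid R \cup \{v_1, \ldots, v_{j-1}\}) \leq (c(v_j)/\fmin) \cdot \mutcov(v_j; N \mid R)$ by monotonicity of mutual coverage in its second argument.

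Summing over $j$ and taking expectations, $\expect{c(G_t)} \leq \sum_j c(v_j) \cdot \expect{\mutcov(v_j;N\mid R)} / \fmin \leq t^{-2}\, c(\opt)$; summing over $t \in [T]$ gives $\sum_t \expect{c(G_t)} = O(c(\opt))$ and, combined with the sampling bound, completes the proof. The main obstacle---and the reason both the $3$-increasing hypothesis and the strengthened LP \eqref{eq:newRelaxation} are needed---is the per-element rounding bound: without submodularity of $\mutcov^{(v)}$ one cannot invoke \cref{lem:multilinFstar} on its contractions, and without the $(*)$ constraints the LP relaxation does not certify any per-element decrease of $\mutcov(v;N\mid R)$ after rounding, so the $\fmax$-for-$f(N)$ improvement would not go through.
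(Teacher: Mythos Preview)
Your proposal is correct and follows essentially the same approach as the paper: the per-element rounding bound $\expect{\mutcov(v;N\mid R)}\le t^{-2}\fmin$ is exactly the paper's (\ref{line:onlineSSC2_mutCovRoundingWorks}), and your fractional charging of $c(h_i)$ to the $v_j$'s via the quantities $\mutcov(v_j;h_i\mid R_i\cup\{v_1,\ldots,v_{j-1}\})$ is precisely the paper's contribution matrix $C(t,s)$ (by symmetry of mutual coverage), with your three ingredients---$c(h_i)\le c(v_j)$ when the term is nonzero, the chain-rule lower bound $\sum_j(\cdot)=f_{R_i}(h_i)\ge\fmin$, and the chain-rule upper bound on the column sum by $\mutcov(v_j;N\mid R)$---matching Claims~\ref{claim:cheapcontribute}, \ref{claim:fminLB}, and \ref{claim:fmaxUB} respectively. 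The only cosmetic difference is that the paper packages the argument as three separate claims and a single displayed computation, whereas you run the charging inline.
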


\begin{proof}
	The algorithm always outputs a feasible solution, and it suffices to relate the cost of $R \cup G$ to the cost of the optimal solution. As before, at every time $t \in [T]$, the probability that $j$ is in $R$ is $1 - (1 - x_j)^k$ and we may view $R$ as the result of $k$ rounds of randomized rounding. Thus the expected cost of $R$ is $\ln (t^2\cdot \fmax / \fmin) \cdot c(x)$. In the remainder, we analyze the greedy phase.
	
	At time $t$, the fractional solution $x$ is feasible to $Q^{(t)}$. If $S_{t-1}$ is the solution the algorithm produced for time $t-1$, define $f := f^{(t)}_{S_{t-1}}$. Then $x$ is also feasible to \eqref{eq:newRelaxation} for the contracted function $f$. By \cref{lem:fNfminround} with $k=\ln(t^2 \cdot \fmax / \fmin)$: 
	\begin{align}
	\expectarg{\mutcov (v; N \mid R)}{R} &\leq t^{-2}\left(\frac{\fmin}{\fmax}\right) \mutcov (v; N) \\
	&= t^{-2}\left(\frac{\fmin}{\fmax}\right) f(v) \label{line:onlineSSC2_mutCovRoundingWorks}
	\end{align}
	Let $G_1,\ldots, G_p$ denote the elements added by the greedy phase in increasing order of cost, and let $G_{1:t} = \{G_1,\ldots, G_t\}$ be the prefix of the first $t$ elements. Also order the elements of $\opt$ arbitrarily $O_1, \ldots, O_q$ and similarly let $O_{1:s} = \{O_1, \ldots, O_s\}$. For convenience, let $G_{1:0} = O_{1:0} = \emptyset$.
	
	We say an element $G_{t}$ \textit{contributes} to $O_{s}$ if conditioned on having selected $G_{1:t-1}$, and $O_{1:s-1}$, selecting $G_t$ changes the marginal value of $O_s$. Let $C(t,s)$ be the contribution of $G_{t}$ to $O_{s}$, or the amount by which this marginal value changes. Formally:
	\begin{align*}
	C(t,s) &:= \mutcov(G_{t}; O_{s} \mid G_{1:t-1} \cup O_{1:s-1} \cup R)
	\end{align*}
	
		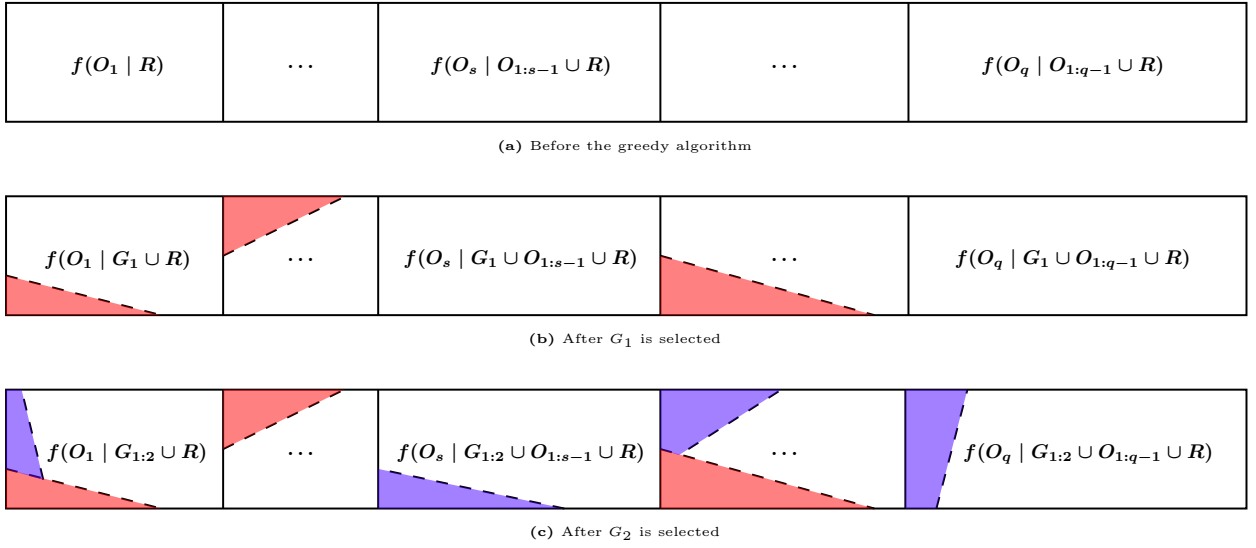
\begin{figure*}
	\centering
	
	\tikzset{every picture/.style={line width=0.75pt}} 
	\begin{subfigure}{\textwidth}
		\begin{tikzpicture}[x=1.173pt,y=0.75pt,yscale=-1,xscale=1,every node/.style={scale=0.75}]
		\centering
		\draw   (110,20) -- (510,20) -- (510,80) -- (110,80) -- cycle ;
		\draw    (180,20) -- (180,80) ;
		\draw    (230,20) -- (230,80) ;
		\draw    (321,20) -- (321,80) ;
		\draw    (401,20) -- (401,80) ;
		\draw (146,52) node  [align=center] {$\boldsymbol{\displaystyle f(O_{1} \mid R)}$};
		\draw (205,52) node  [align=center] {$\boldsymbol{\ldots}$};
		\draw (276,52) node  [align=center] {$\boldsymbol{\displaystyle f( O_{s} \mid O_{1:s-1} \cup R)}$};
		\draw (361,52) node  [align=center] {$\boldsymbol{\ldots}$};
		\draw (454,52) node  [align=center] {$\boldsymbol{\displaystyle f( O_{q} \mid O_{1:q-1} \cup R)}$};
		\end{tikzpicture}
		\caption{Before the greedy algorithm}
		\label{fig:beforeGreedy}
	\end{subfigure}
	
	\vspace{0.2in}
	
	\begin{subfigure}{\textwidth}
		\begin{tikzpicture}[x=1.173pt,y=0.75pt,yscale=-1,xscale=1,every node/.style={scale=0.75}]
		\centering
		\draw   (110,150) -- (510,150) -- (510,210) -- (110,210) -- cycle ;
		\draw    (180,150) -- (180,210) ;
		\draw    (230,150) -- (230,210) ;
		\draw    (321,150) -- (321,210) ;
		\draw    (401,150) -- (401,210) ;
		\draw  [dash pattern={on 4.5pt off 4.5pt}]  (110,190) -- (160,210) ;
		\draw  [dash pattern={on 4.5pt off 4.5pt}]  (180,180) -- (219.5,150) ;
		\draw  [dash pattern={on 4.5pt off 4.5pt}]  (321,180) -- (390,210) ;
		\draw  [draw opacity=0][fill={rgb, 255:red, 255; green, 0; blue, 0 }  ,fill opacity=0.5 ] (110,190) -- (160,210) -- (110,210) -- cycle ;
		\draw  [draw opacity=0][fill={rgb, 255:red, 255; green, 0; blue, 0 }  ,fill opacity=0.5 ] (180,180) -- (219,150) -- (180,150) -- cycle ;
		\draw  [draw opacity=0][fill={rgb, 255:red, 255; green, 0; blue, 0 }  ,fill opacity=0.5 ] (321,180) -- (390,210) -- (321,210) -- cycle ;
		
		\draw (146,182) node  [align=center] {$\boldsymbol{\displaystyle f(O_{1} \mid G_1 \cup R)}$};
		\draw (205,182) node  [align=center] {$\boldsymbol{\ldots}$};
		\draw (276,182) node  [align=center] {$\boldsymbol{\displaystyle f(O_{s} \mid G_1 \cup O_{1:s-1} \cup R)}$};
		\draw (361,182) node  [align=center] {$\boldsymbol{\ldots}$};
		\draw (454,182) node  [align=center] {$\boldsymbol{\displaystyle f(O_{q} \mid G_1 \cup O_{1:q-1} \cup R)}$};
		\end{tikzpicture}
		\caption{After $G_1$ is selected}
		\label{fig:afterGreedy1}
	\end{subfigure}
	
	\vspace{0.2in}
	
	\begin{subfigure}{\textwidth}
		\begin{tikzpicture}[x=1.173pt,y=0.75pt,yscale=-1,xscale=1,every node/.style={scale=0.75}]
		\centering
		\draw   (110,280) -- (510,280) -- (510,340) -- (110,340) -- cycle ;
		\draw    (180,280) -- (180,340) ;
		\draw    (230,280) -- (230,340) ;
		\draw    (321,280) -- (321,340) ;
		\draw    (400,280) -- (400,340) ;
		\draw  [dash pattern={on 4.5pt off 4.5pt}]  (110,320) -- (160,340) ;
		\draw  [dash pattern={on 4.5pt off 4.5pt}]  (180,310) -- (219,280) ;
		\draw  [dash pattern={on 4.5pt off 4.5pt}]  (321,310) -- (390,340) ;
		\draw  [dash pattern={on 4.5pt off 4.5pt}]  (122,325.5) -- (115,280) ;
		\draw  [dash pattern={on 4.5pt off 4.5pt}]  (290,340) -- (230,320) ;
		\draw  [dash pattern={on 4.5pt off 4.5pt}]  (330,310) -- (360,280) ;
		\draw  [dash pattern={on 4.5pt off 4.5pt}]  (410,340) -- (420,280) ;
		\draw  [draw opacity=0][fill={rgb, 255:red, 255; green, 0; blue, 0 }  ,fill opacity=0.5 ] (110,320) -- (160,340) -- (110,340) -- cycle ;
		\draw  [draw opacity=0][fill={rgb, 255:red, 255; green, 0; blue, 0 }  ,fill opacity=0.5 ] (180,310) -- (219,280) -- (180,280) -- cycle ;
		\draw  [draw opacity=0][fill={rgb, 255:red, 255; green, 0; blue, 0 }  ,fill opacity=0.5 ] (321,310) -- (390,340) -- (321,340) -- cycle ;
		\draw  [draw opacity=0][fill={rgb, 255:red, 70; green, 0; blue, 255 }  ,fill opacity=0.5 ] (230,320) -- (290,340) -- (230,340) -- cycle ;
		\draw  [draw opacity=0][fill={rgb, 255:red, 70; green, 0; blue, 255 }  ,fill opacity=0.5 ] (327,313) -- (321,310) -- (321,280) -- (360,280) -- cycle ;
		\draw  [draw opacity=0][fill={rgb, 255:red, 70; green, 0; blue, 255 }  ,fill opacity=0.5 ] (420,280) -- (400,280) -- (400,340) -- (410,340) -- cycle ;
		\draw  [draw opacity=0][fill={rgb, 255:red, 70; green, 0; blue, 255 }  ,fill opacity=0.5 ] (110,320) -- (122,325.5) -- (115,280)-- (110,280) -- cycle ;
		
		\draw (149,312) node  [align=center] {$\boldsymbol{\displaystyle f(O_{1} \mid G_{1:2} \cup R)}$};
		\draw (205,312) node  [align=center] {$\boldsymbol{\ldots}$};
		\draw (276,312) node  [align=center] {$\boldsymbol{\displaystyle f(O_{s} \mid G_{1:2} \cup O_{1:s-1}  \cup R)}$};
		\draw (361,312) node  [align=center] {$\boldsymbol{\ldots}$};
		\draw (459,312) node  [align=center] {$\boldsymbol{\displaystyle f(O_{q} \mid G_{1:2} \cup O_{1:q-1} \cup R)}$};
		\end{tikzpicture}
		\caption{After $G_2$ is selected}
		\label{fig:afterGreedy2}
	\end{subfigure}
	
	\vspace{0.1in}
	
	\caption{Illustration of the charging scheme. The white areas initially represent the marginal values of each element of $\opt$ conditioned on the elements that come before it in the ordering. The red areas represent the contribution of $G_1$ to the elements of $\opt$, or the amount by which the marginal value of each element in $\opt$ changes after conditioning on $G_1$. Likewise, the purple areas represent the contribution of $G_2$ to the elements of $\opt$. }
\end{figure*}
	
	We will need the three claims we alluded to in the opening of the section to proceed.
	\begin{claim}
		\label{claim:cheapcontribute}
		If $G_t$ contributes to $O_s$, i.e. $C(t,s) > 0$, then $c(G_{t}) \leq c(O_{s})$.
	\end{claim}	
	\begin{claim}
	\label{claim:fminLB}
	For any index $t \in [p]$ we have: $\sum_s C(t,s) \geq \fmin$.
	\end{claim}	
	\begin{claim}
	\label{claim:fmaxUB}
	For any index $s \in [q]$ we have: $\expectarg{ \sum_t C(t,s)}{R} \leq t^{-2}\fmin $.
\end{claim}
Before we prove these claims, let us show how they imply the statement of the theorem.
\begin{align}
\expect{\sum_t c(G_{t})} &= \expect{\sum_t \left(\sum_s \frac{C(t,s)}{\sum_{s'} C(t,s')} \right) c(G_{t})} \notag \\
&\leq \expect{\sum_s \left(\sum_t \frac{C(t,s)}{\sum_{s'} C(t,s')}\right) c(O_{s})} \label{line:hardround_claim1} \\
&\leq \displaystyle \sum_s \frac{\expect{\sum_t C(t,s)}}{\fmin} \cdot c(O_{s}) \label{line:hardround_claim2} \\
&\leq t^{-2} c(\opt) \label{line:hardround_claim3}
\end{align}
Above \eqref{line:hardround_claim1} is due to \cref{claim:cheapcontribute}, \eqref{line:hardround_claim2} due to \cref{claim:fminLB} and \eqref{line:hardround_claim3} due to \cref{claim:fmaxUB}. Thus the expected cost of cumulative alterations over all time steps $t \in [T]$ is $O(c(\opt))$. To wrap up, \hyperref[thm:buchbinder2009]{Theorem 2.1} implies that ${c(x) = O(\ln n) \cdot c(\opt)}$, hence:
\begin{align*}
\expect{c(R \cup G)} &= \ln \left(T\cdot \frac{\fmax}{\fmin} \right)\cdot c(x) + O(c(\opt)) = O\left( \ln n \ln \left( T \cdot \frac{\fmax}{\fmin} \right) \right) \cdot c(\opt)
\end{align*}
It remains to prove the three claims.
\begin{proof}[Proof of \cref{claim:cheapcontribute}]
	The greedy algorithm repeatedly selects the cheapest element that has nonzero marginal value. If for some element $O_{s}$ it is the case that $c(O_{s}) < c(G_{t})$, then by the time $G_{t}$ is picked, $O_{s}$ must already have been selected, which means $C(t,s) = 0$.
\end{proof}

\begin{proof}[Proof of \cref{claim:fminLB}]
	This is immediate from the definitions:
	\begin{align}
	\sum_{s \in [q]} C(t,s) &= \sum_{s \in [q]} \mutcov(G_{t}; O_{s} \mid G_{1:t-1} \cup O_{1:s-1} \cup R) = \mutcov(G_{t}; N \mid G_{1:t-1} \cup R) \label{line:hardround_chainRule}\\
	&= f_{G_{1:t-1} \cup R}(G_{t}) \geq \fmin \label{line:hardround_fmin}
	\end{align}
	where step \eqref{line:hardround_chainRule} used the \nameref{fact:chainRule}, and the final step \eqref{line:hardround_fmin} used the definition of $\fmin$.
\end{proof}

\begin{proof}[Proof of \cref{claim:fmaxUB}]
	Again by definition:
	\begin{align*}
	\sum_t C(t,s) &= \sum_{t \in [p]} \mutcov(G_{t}; O_{s} \mid G_{1:t-1} \cup O_{1:s-1} \cup R) = \mutcov(N; O_{s} \mid O_{1:s-1} \cup R) \leq \mutcov(N; O_{s} \mid R)
	\end{align*}
	Where the inequality follows from the submodularity of $f$ and $\mutcov(N ; O_s \mid O_{1:s-1} \cup R) = f(O_s \mid O_{1:s-1} \cup R)$. Taking expectation over $R$:
	\begin{align*}
	\expectarg{\sum_t C(t,s)}{R} &\leq \expectarg{\mutcov(O_s; N \mid R)}{R} \leq t^{-2}\left(\frac{\fmin}{\fmax}\right) f(O_{s})
	\leq t^{-2}\fmin
	\end{align*}
	The second inequality
	follows from (\ref{line:onlineSSC2_mutCovRoundingWorks}), and the third 
	from the definition of $\fmax$.
\end{proof}
Given the claims, the proof of \hyperref[thm:slowOnlineSSC2]{Theorem 3.1} is complete.
\end{proof}

It remains to show that the algorithm of this section and the previous
one can be made efficient; this we do in the following section.

\section{Polynomial Runtime}

\label{subsec:polytime}

The algorithms in the two previous sections na{\"\i}vely run in
exponential time, and here are some of the barriers to making them efficient. For one, the algorithm of~\cite{buchbinder2009online} runs in time linear in the number of constraints added, and naively both $(P^{(t)})$ and $(Q^{(t)})$ add an exponential number of constraints at every step. The standard method for solving LPs with many constraints is to construct an efficient separation oracle that, given a point as input, either confirms that it is a feasible solution or outputs a violated constraint. Given such an oracle, we could check if our current solution is feasible, if so output the solution, if not feed the violated constraint to \cite{buchbinder2009online} and repeat.

However, a polynomial-time separation oracle is unlikely to exist since
evaluating $f^*$ is \APX-hard~\cite[Section
3.7.2]{vondrak2007submodularity}. Moreover, even assuming access to a
separation oracle, we would additionally need to argue that we can make
do with a polynomial number of calls to the oracle. However,
the algorithm of \cite{buchbinder2009online} makes no a priori
guarantees about how many constraints it will need to fix before it
produces a feasible solution (in contrast to the ellipsoid algorithm).

Our solution is to avoid solving the fractional LP directly. Since we
merely want to find a feasible integer solution with bounded expected cost, we combine rounding and separation. We show that every LP solution either guarantees random rounding will make progress, or violates an efficiently computable constraint by a large margin. Finally, we show that fixing the large-margin violations requires a large change to the LP solution; by a potential function argument this can only be done a polynomial number of times before the solution is necessarily feasible.

We describe this strategy in the context of the simpler LP $P^{(t)}$. Since $Q^{(t)}$ (when $f$ is $3$-increasing) is the intersection of several of the simpler LP's constraints, one for each submodular function $\mutcov^{(\{v\})}$, this also implies an efficient algorithm for the second setting, because we can search for a large margin violation in \emph{any} of the submodular cover LPs we are intersecting, and round if none is found.

\subsection{Finding Violated Constraints.}

\cref{lem:multilinFstar} implies that for monotone, nonnegative functions $g$, the multilinear extension $G$ (which is precisely the expected coverage of a set obtained by randomized rounding according to $x$) cannot be smaller than $(1- e^{-1}) g^*$. This means that if $G(x)$ is much smaller than $g(N)$, then so is $g^*(x)$, and there must exist $C$ such that $g^{\#C}(x) < g(N)$. Call such a set $C$ \textit{violated}, since we will soon use this as a mechanism for finding violated constraints of \eqref{eq:submodCoverLP}. \cref{lem:multilinFstar} does not immediately imply a constructive algorithm for finding a set $C$ witnessing the violation. We show that the following procedure, inspired by the proof of  \cref{lem:multilinFstar} (Lemma 3.8 of \cite{vondrak2007submodularity}), allows us find such a $C$.

The input is a fractional LP solution $x$. For every element $i\in N$, sample an exponential random variable $h_i$ with parameter $x_i$, and order all the elements $i_1, \ldots i_n$ in increasing order of their corresponding expontential $h_i$. Of all prefixes of this ordering, consider the prefix $Q = i_1, \ldots i_z$ that minimizes $g^{\#Q}(x)$. If $g^{\#Q}(x) < g(N)$, then say that $Q$ is a violated set. Repeat these steps $3/\epsilon \cdot \ln (1 / \delta)$ times or until a violated set is found. If none is found, return ``no constraint found''.
\begin{lemma}
	\label{lem:findViolated}
	There is an algorithm such that given a nonnegative, monotone function $g: 2^N \rightarrow \mathbb{R}^+$  with $g(\emptyset) = 0$, and $x \in [0,1]^n$ with the guarantee that $G(x) < (e(1 + \epsilon))^{-1} g(N)$, with probability $1 - \delta$ the algorithm yields a subset $C \subseteq N$ such that
	\begin{align*}
	g^{\#C}(x) < \left(1 - \frac{\epsilon}{2}\right) \cdot g(N).
	\end{align*}
	Furthermore, the procedure runs in time $\poly\left( \frac{1}{\epsilon} \ln \frac{1}{\delta}\right)$.
\end{lemma}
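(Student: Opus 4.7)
The plan is to view the exponential-clock sampling as a continuous-time process and translate the differential argument underlying \cref{lem:multilinFstar} into a statement about the minimum-coverage prefix. Let $A(t) := \{i \in N : h_i \leq t\}$, so the prefixes examined by the algorithm are exactly the sets $A(t)$ for $t \geq 0$, with $\prob{i \in A(t)} = 1 - e^{-x_i t}$. Define $\phi(t) := \expect{g(A(t))}$; then $\phi(0) = 0$ and $\phi$ is monotone nondecreasing by monotonicity of $g$.

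The key technical step is the identity
\[
\expect{g^{\#A(t)}(x)} \;=\; \phi(t) + \phi'(t).
\]
I would prove it by expanding $g^{\#A(t)}(x) = g(A(t)) + \sum_j x_j\, g_{A(t)}(j)$ and conditioning on whether $j \in A(t)$ to obtain $\expect{g_{A(t)}(j)} = e^{-x_j t}\,\partial_j G(p(t))$, where $p_i(t) := 1 - e^{-x_i t}$. The chain rule then gives $\phi'(t) = \frac{d}{dt} G(p(t)) = \sum_j x_j e^{-x_j t}\,\partial_j G(p(t))$, matching $\sum_j x_j \expect{g_{A(t)}(j)}$.

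With the identity in hand, integrating from $0$ to $1$ yields
\[
\int_0^1 \expect{g^{\#A(t)}(x)}\,dt \;=\; \phi(1) + \int_0^1 \phi(t)\,dt \;\leq\; 2\,\phi(1),
\]
using monotonicity of $\phi$. Since $g$ is monotone and $1 - e^{-x_i} \leq x_i$, we have $\phi(1) = G(1 - e^{-x}) \leq G(x) < g(N)/(e(1+\epsilon))$. Let $X := \min_Q g^{\#Q}(x)$ be the minimum over prefixes in the sampled ordering. Because this equals $\min_{t \geq 0} g^{\#A(t)}(x)$, which is at most the average of $g^{\#A(t)}(x)$ over $[0, 1]$, taking expectations gives $\expect{X} < 2g(N)/(e(1+\epsilon))$. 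Moreover, taking $Q = N$ shows $X \leq g^{\#N}(x) = g(N)$ always.

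Finally, a reverse-Markov argument on the bounded variable $X \in [0, g(N)]$ gives
\[
\prob{X < (1 - \epsilon/2)\,g(N)} \;\geq\; 1 - \frac{2}{e(1+\epsilon)(1-\epsilon/2)} \;=\; \Omega(1),
\]
so a single trial produces a set $C$ with $g^{\#C}(x) < (1-\epsilon/2)\,g(N)$ with constant probability. Repeating $O\!\left(\tfrac{1}{\epsilon}\ln\tfrac{1}{\delta}\right)$ times amplifies the success probability to $1-\delta$. Each trial samples $n$ exponentials, sorts them, and evaluates $g^{\#Q}(x)$ on each prefix using $O(n)$ oracle calls, for total runtime $\poly(n, \tfrac{1}{\epsilon}, \ln\tfrac{1}{\delta})$. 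I expect the main obstacle to be establishing the identity $\expect{g^{\#A(t)}(x)} = \phi(t) + \phi'(t)$: it is the combinatorial reflection of the differential inequality powering \cref{lem:multilinFstar}, and translates Vondr\'ak's continuous-greedy insight from a statement about a single random set into one about random prefixes. Once the identity is in hand, the remaining steps (integration, reverse Markov, and amplification) are routine.
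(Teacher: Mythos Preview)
Your proposal is correct and follows essentially the same architecture as the paper: the exponential-clock process $A(t)$, the identity $\expect{g^{\#A(t)}(x)} = \phi(t) + \phi'(t)$ (which the paper derives via an infinitesimal conditioning argument rather than via $\partial_j G$, but it is the same identity), the bound $\phi(1) \leq G(x)$, and a Markov step. The one substantive deviation is how you extract a bound on $\expect{\min_t g^{\#A(t)}(x)}$ from the identity. The paper multiplies by the integrating factor $e^t$, applies the mean value theorem to $\int_0^1 e^z\,\expect{g^{\#A(z)}(x)}\,dz$, and obtains $\expect{g^{\#Q}(x)} \leq e\,G(x)$; the ensuing Markov bound then gives per-trial success probability $\Omega(\epsilon)$, which is why the paper repeats $\Theta(\epsilon^{-1}\ln(1/\delta))$ times. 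You instead integrate $\phi(t)+\phi'(t)$ directly over $[0,1]$ and use monotonicity of $\phi$ to get $\expect{g^{\#Q}(x)} \leq 2\,G(x)$, which (since $2<e$) yields a strictly tighter bound and a per-trial success probability bounded below by the absolute constant $1-2/e$ for all $\epsilon\in(0,1]$. So your route is marginally simpler and in fact shows that $O(\ln(1/\delta))$ repetitions already suffice; the $O(\epsilon^{-1}\ln(1/\delta))$ you quote is more than needed but of course still within the stated $\poly(\epsilon^{-1}\ln(1/\delta))$ budget.
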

\begin{proof}
	Consider the set $C(t) := \{i \in N \ : \ h_i < t\}$. 
	Since $\prob{j \in C(1)} = 1 - e^{-x_j} \leq x_j$, we have by monotonicity of $g$ that $G(x) = \expectarg{g(S)}{S \sim x} \geq \expect{g(C(1)}$.
	Now for any $C$,
	\begin{align*}
	&\expect{g_{C(t)}(C(t + dt)) \mid C(t) = C} = \sum_{D \subseteq N} \prob{C(t + dt) = D \mid C(t) = C} \cdot g_C(D) \\
	&= \sum_{D \subseteq N \backslash C} g_C(D) \cdot \prod_{j \in D} (x_j \, dt)  \prod_{j' \not \in D} (1-x_{j'} \, dt) = \sum_{j \in N} g_C(j) x_j dt + O(dt^2) \\
	&= (g^{\#C}(x) - g(C)) dt + O(dt^2).
	\end{align*}
	Dividing by $dt$ and taking an expectation over $C$,
	\begin{align*}
	&\frac{1}{dt} \expect{g(C(t + dt)) - g(C(t))} \geq \expect{g^{\#C(t)}(x)} - \expect{g(C(t))} + O(dt).
	\end{align*}
	Now define $\phi(t) = \expect{g(C(t))}$. Taking the limit of the last expression as $dt \rightarrow 0$ we get that $\frac{d\phi}{dt}= \expect{g^{\#C(t)}(x)} - \phi(t)$. 
	 Using $e^t$ as an integrating factor, for any $t \geq 0$
	\begin{align*}
	e^t \phi(t) = \int_0^t \frac{d(e^z\phi(z))}{dz} dz = \int_0^t e^z \expect{g^{\#C(t)}(x)} dz.
	\end{align*}
	By the mean value theorem, for some value $t^* \in [0, t]$ it holds that
	\begin{align*}e^t \phi(t) = t e^{t^*} \expect{g^{\#C(t^*)}(x)} \geq t \expect{g^{\#C(t^*)}(x)}.\end{align*}
	Now setting $t=1$ and using the fact that $G(x) \geq \phi(1)$, we have $\expect{g^{\#C(t^*)}(x)} \leq e \cdot G(x)$. By construction, $Q = C(t')$ for some $t' = \argmin_t g^{\#C(t)}(x)$. The inequality $g^{\#Q}(x) \leq g^{\#C(t^*)}(x)$ holds regardless of the realizations of $h_1, \ldots, h_n$, hence
	\begin{align*}
	\expect{g^{\#Q}(x)}  \leq \expect{g^{\#C(t^*)}(x)} \leq e \cdot G(x).
	\end{align*}
	By a Markov bound, with probability at least $1 - (1 + \epsilon/2)^{-1} \geq \epsilon / 3$:
	\begin{align*}g^{\#Q}(x) &\leq \left(1+ \frac{\epsilon}{2} \right) e \cdot G(x) < \frac{\left(1+ \frac{\epsilon}{2} \right) e}{(1 + \epsilon) e} \cdot g(N) \leq \left(1 - \frac{\epsilon}{2} \right)\cdot g(N). \end{align*}
	Since we repeat this procedure $k = (3 / \epsilon) \ln 1 / \delta$ times, the probability that after $k$ attempts no violated constraint is found is: 
	$\left(1 - \epsilon / 3\right)^{k} \leq \exp\left( 3 / \epsilon \cdot \epsilon / 3 \ln \delta \right) = \delta$.
\end{proof}

\subsection{Implementing the Algorithm Efficiently.}

\label{sec:efficient}

We modify the proof of \cref{thm:slowOssc} to exploit the approximate separation oracle. This time we do not ever solve \eqref{eq:submodCoverLP} explicitly. Instead, start with a fractional solution $x = 0$. At time $t$, set $f := f^{(t)}_{S_{t-1}}$ (where $S_{t-1
}$ is the solution produced for the previous time step), and perform the following procedure which proceeds in rounds.

In each round, search for violated constraints using the procedure from the last section, with $g= f$ and $\delta = 6\cdot \cmin / (\pi^2  kt^2 \cdot c(N))$. If one is found, feed it to \cite{buchbinder2009online}, update $x$ accordingly and restart the current round. If none is found, perform sample every $i \in N$ with probability $x_i$ (i.e. perform one iteration of randomized rounding).

In total, perform $k := (\ln 1/ \gamma)^{-1} \ln (t^2\cdot f(N) / \fmin)$ rounds of rounding with respect to the appropriate $\Delta \in [0,1]^n$ (where $\gamma := 1 - (e(1 + \epsilon))^{-1}$) and add the result to set $R$. To finish, again greedily add to set $G$ the cheapest element $j \in N$ with non-zero marginal value, until none remain. Return $R \cup G$.

Crucially, we argue that we do not need to fix too many constraints over the course of the algorithm.
\begin{lemma}
	\label{lem:polytime}
	The algorithm above finds a violated constraint at most $O(n / \epsilon)$ times.
\end{lemma}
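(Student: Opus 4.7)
The plan is a potential-function argument with $\Phi(x) := \sum_{j \in N} x_j$. Since the online covering-LP solver of \hyperref[thm:buchbinder2009]{Theorem 2.1} produces a monotonically nondecreasing sequence of fractional solutions with each $x_j \in [0,1]$, the quantity $\Phi$ is monotonically nondecreasing throughout the algorithm's entire run and bounded above by $n$. I will show that each time the approximate separation oracle of \cref{lem:findViolated} returns a violated set $C$ and we feed the corresponding constraint to the LP solver, $\Phi$ must grow by at least $\epsilon/2$, immediately yielding the $O(n/\epsilon)$ bound.

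First, I translate the oracle's guarantee into an additive violation. When the oracle returns $C$, we have $f^{\#C}(x) < (1-\epsilon/2)\, f(N)$, i.e.\ $f(C) + \sum_j f_C(j)\, x_j < (1-\epsilon/2)\, f(N)$. Since $f_C(N) = f(N) - f(C)$, this rearranges to an additive violation $f_C(N) - \sum_j f_C(j)\, x_j > (\epsilon/2)\, f(N)$ of the constraint $\sum_j f_C(j)\, x_j \geq f_C(N)$. Second, by the guarantee of \hyperref[thm:buchbinder2009]{Theorem 2.1}, feeding this constraint drives the LP solver to update $x$ until the constraint is satisfied; writing $\Delta x_j \geq 0$ for the update, this yields $\sum_j f_C(j)\, \Delta x_j > (\epsilon/2)\, f(N)$. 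Third, by monotonicity of $f$ every coefficient satisfies $f_C(j) = f(C \cup \{j\}) - f(C) \leq f(N)$; substituting gives $f(N) \cdot \Delta \Phi \geq \sum_j f_C(j)\, \Delta x_j > (\epsilon/2)\, f(N)$, hence $\Delta \Phi > \epsilon/2$. Since $\Phi$ starts at $0$ and never exceeds $n$, the total number of violations is at most $2n/\epsilon$.

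The main point to be careful about is that the referenced algorithm of \cite{buchbinder2009online} genuinely drives a fed violated constraint all the way to feasibility within one call, since the step above relies on this; this is the standard behavior of their online fractional covering procedure, which runs an inner update loop until the current constraint holds, rather than only an amortized progress guarantee. I also note that $f_C(N) > 0$ whenever a violation is returned (otherwise $f(N) = f(C)$ and then $f^{\#C}(x) \geq f(C) = f(N)$, contradicting the strict inequality), so the constraint being fed is always nontrivial and the chain of inequalities above goes through without division-by-zero concerns.
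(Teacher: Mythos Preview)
Your proof is correct and is essentially the same potential-function argument as the paper's: both track $\Phi(x)=\sum_j x_j\le n$, observe that each returned violated set $C$ gives an additive slack of $(\epsilon/2)$ times the relevant $f(N)$, bound the coefficients $f_C(j)$ by $f(N)$ (the paper equivalently normalizes by $f_R(N)$ and uses that the normalized coefficients are at most $1$), and conclude $\Delta\Phi>\epsilon/2$ per fix. The only cosmetic difference is that the paper writes the argument in terms of the contracted function $f_R$ at the current round, whereas you write it for a generic $f$; your added remarks about the Buchbinder--Naor solver driving the constraint to feasibility and about $f_C(N)>0$ are helpful sanity checks but not points of departure.
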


\begin{proof}
	Consider a round in which the algorithm finds a violated constraint indexed by the set $C$. By \cref{lem:findViolated},
	\begin{align*}
	f_R(C) + \sum_{j\in N} f_{R \cup C}(j) \cdot x_j < \left(1 - \frac{\epsilon}{2}\right) \cdot f_R(N). 
	\end{align*}	
	In order to fix this constraint in this round, \cite{buchbinder2009online} must increase $x$ by a vector $\Delta x$ such that
	\begin{align*}
	\sum_{j \in N} \frac{f_{R\cup C}(j)}{f_R(N)} \cdot (\Delta x)_j > \frac{\epsilon}{2}.
	\end{align*}
	Since $f_{R\cup C}(j) / f_R(N) \leq 1$, it follows that
	\begin{align*}
	\sum_{j \in N} (\Delta x)_j  \geq \sum_{j \in N}  \frac{f_{R\cup C}(j)}{f_R(N)} (\Delta x)_j > \frac{\epsilon}{2}.
	\end{align*}
	Since $x \leq 1$, the total sum $\sum_j x_j$ is bounded by $n$, and thus we can update $x$ at most $2n / \epsilon$ times.
\end{proof}
We can finally conclude with our main theorem.
\mainthm*
\begin{proof}	
	The algorithm always produces a feasible solution and we need to bound its cost. Since we perform $(\ln 1/ \gamma)^{-1} \ln (t^2\cdot f(N) / \fmin)$ rounds of rounding, and $(\ln 1/ \gamma)^{-1} \leq e(1 + \epsilon)$, the expected cost of $R$ is at most $e(1 + \epsilon) \cdot \ln (t^2\cdot f(N) / \fmin) \cdot c(x)$. Next we bound the cost of the greedy phase.  Let $G_t$ denote the elements added to $G$ in iteration $t$.
	
	Let $R_{1:\ell}$ denote the state of $R$ at the end of round $\ell \in [k]$ and let $R_\ell$ denote the set of elements sampled in round $\ell$. By the contrapositive of \cref{lem:findViolated}, for every $v \in N$ and every $\ell \in [k]$, with probability $1 - 6\cdot \cmin / (\pi^2  kt^2 \cdot c(N))$ it holds that:
		\begin{align}
		\expectarg{f_{R_{1:\ell-1}}(R_{\ell})}{R_\ell \sim x} \geq \frac{1}{e(1 + \epsilon)} f_{R_{1:\ell-1}}(N) \label{line:fancyonlineSSC_roundingWorksMutCov}
		\end{align}
		Let $\mathcal{E}$ be the event that \eqref{line:fancyonlineSSC_roundingWorksMutCov} holds for every round of rounding over all time steps $t \in [T]$. By a union bound and the fact that $\sum t^{-2} \leq \pi^2 / 6$, event $\mathcal{E}$ holds with probability at least $1 - \cmin/ c(N)$. Conditioned on this being the case, the triple $x$, $f$ and $R_{1:k}$ together satisfies the conditions of the \nameref{lem:roundingLemma} with $\gamma = 1 - (e(1 - \epsilon))^{-1}$. Hence:
		\begin{align*}
		\expectarg{f_R(N)}{R} &\leq t^{-2}\left(\frac{\fmin}{f(N)}\right) f(N) \\
		&= t^{-2} \fmin 
		\end{align*}
		With this, as in the proof of \cref{thm:slowOssc}, we have  $\expect{c(G_{t}) \mid \mathcal{E}} \leq t^{-2} c(\opt)$, and cumulatively $\expect{\sum_{t\in [T]} c(G_{t}) \mid \mathcal{E}} \leq c(\opt)$. If $\mathcal{E}$ does not hold, we incur a cost of at most $c(N)$. Hence the total expected cost of cumulative alterations is:
		\begin{align*}
		\expect{c(G)} &= \expect{c(G) \mid \mathcal{E}} \cdot \prob{\mathcal{E}} + \expect{c(G) \mid \overline{\mathcal{E}}} \cdot \prob{\overline{\mathcal{E}}} \leq \frac{\cmin}{ c(N)} \cdot c(N) + O(c(\opt)) = O(c(\opt))
		\end{align*}
		Finally, \hyperref[thm:buchbinder2009]{Theorem 2.1} implies that ${c(x) = O(\ln n) \cdot c(\opt)}$. Setting $\epsilon = 1$, we conclude:
		\begin{align*}
		&\expect{c(R \cup G)} = e(1 + \epsilon) \ln \left(T^2\cdot \frac{f(N)}{\fmin} \right) c(x) + O(c(\opt)) \leq O\left( \ln \left(T\cdot \frac{f(N)}{\fmin} \right)\right) c(\opt)
		\end{align*}
		This completes the proof of correctness. The polynomial runtime follows from \cref{lem:polytime}.
\end{proof}

 \section{Note on Changes since Publication} 

 \label{sec:version_notes}

 The original version of this paper that appeared at SODA 2020 claimed \cref{thm:slowOnlineSSC2} without assuming $f \in \mathcal{D}_3^+$. We would like to thank Jonny Gal for pointing out to us an error with \cref{lem:fancyrelaxation}, which is remedied by assuming $f \in \mathcal{D}_3^+$. The assumption further simplifies the proofs of \cref{subsec:betterround}, since the strengthened LP \eqref{eq:newRelaxation} is the intersection of several LPs \eqref{eq:submodCoverLP}. The remaining proofs remain essentially unchanged, though we have also simplified the exposition in \cref{sec:efficient}. Showing \cref{thm:slowOnlineSSC2} for the class of all nonnegative monotone submodular functions remains an interesting open problem.

{\footnotesize
\bibliography{refs}
\bibliographystyle{alpha}
}
		
\appendix

\section{Necessity of Time-Monotonicity}
\label{section:timeMonotonicity}
We note briefly that if we drop the time-monotonicity condition, then no algorithm can achieve a bounded competitive ratio. Consider the sequence $f^{(0)}, f^{(1)}, f^{(2)}$ such that $f^{(0)} \equiv f^{(2)} \equiv 0$, and $f^{(1)}(S) = {\pmb 1}(j \in S)$ for an abitrary choice of $j \in N$. In this case the optimal solution at time $2$ is the empty set, whereas any online algorithm must output a set containing $j$.

\section{Mutual Coverage is not Submodular}

\label{section:appendixStrongSubmod}

We give one example which shows that mutual coverage is in general \textbf{not} submodular as a function of one of its arguments. Furthermore, the example demonstrates that the condition $F_S(x) \geq \alpha f_S(N)$ for all subsets $S \subseteq N$ does not guarantee that $\expectarg{\mutcov^{(y \mid S)}(R)}{R \sim x} \geq \alpha \cdot \mutcov^{(y \mid S)}(N)$. This justifies our use of the stronger LP \eqref{eq:newRelaxation} instead of \eqref{eq:submodCoverLP} in section \cref{subsec:betterround}.

Consider the following function on $\{a, b, y\}$:
\begin{align*}
	\begin{array}{lll}
		f(\emptyset) = 0 & f(b) = 1 & f(b+y) = 2 \\
		f(a + b+ y) = 10 & f(y) = 1 & f(a + b) = 10  \\
		& f(a) = 9 & f(a + y) = 10 \\
	\end{array}
\end{align*}

The function $f$ is submodular, but $\mutcov^{(y)}(a) -  \mutcov^{(y)}(\emptyset) < \mutcov^{(y)}(b + a) -  \mutcov^{(y)}(b)$, since:
\begin{align*}
	&\mutcov^{(y)}(a) -  \mutcov^{(y)}(\emptyset) \\
	&= f( a) - f(y + a) - f(\emptyset) + f(y) \\
	&= 0 \\
	&\mutcov^{(y)}(b + a) -  \mutcov^{(y)}(b) \\
	&= f(b + a) - f(a + b + y) - f(b) + f(b + y) \\
	&= 1
\end{align*}
Now consider the vector $x = [x_a, x_b, x_y] = [3/4, 1/2, 0]$. One can check that $F_S(x) \geq 1/2 \cdot f_S(N)$ for all $S \subseteq N$ but:
\begin{align*}
	&\expectarg{\mutcov^{(y)}(R)}{R \sim x} \\
	&=  \frac{3}{8} \mutcov^{(y)}(a + b) + \frac{3}{8} \mutcov^{(y)}(a) + \frac{1}{8} \mutcov^{(y)}(b) + \frac{3}{8} \mutcov^{(y)}(\emptyset) \\
	&=\frac{3}{8} < \frac{1}{2} \mutcov^{(y)}(N)
\end{align*}

\end{document}